\DeclareMathOperator*{\argmin}{argmin}
\begin{document}
\theoremstyle{plain}
\newtheorem{thm}{Theorem}
\newtheorem{remark}{Remark}
\newtheorem{lemma}{Lemma}
\newtheorem{prop}{Proposition}
\newtheorem*{cor}{Corollary}
\theoremstyle{definition}
\newtheorem{defn}{Definition}
\newtheorem{condi}{Condition}
\newtheorem{assump}{Assumption}

\title{Safe and Human-Like Autonomous Driving: A Predictor-Corrector Potential Game Approach}
\author{Mushuang Liu\IEEEauthorrefmark{1},~\IEEEmembership{Member,~IEEE,}  ~H. Eric Tseng\IEEEauthorrefmark{3},  ~Dimitar Filev\IEEEauthorrefmark{3},~\IEEEmembership{Fellow,~IEEE,} ~ Anouck Girard\IEEEauthorrefmark{2},~\IEEEmembership{Senior Member,~IEEE} and Ilya Kolmanovsky\IEEEauthorrefmark{2},~\IEEEmembership{Fellow,~IEEE}

\thanks{\IEEEauthorrefmark{1} M. Liu is with the Department of Mechanical and Aerospace Engineering, University of Missouri, Columbia, MO, USA (email: ml529@missouri.edu).
}

\thanks{\IEEEauthorrefmark{3} H. E. Tseng, and D. Filev are with Ford Research and Innovation Center, 2101 Village Road, Dearborn, MI 48124, USA (e-mail: htseng@ford.com and dfilev@ford.com).}
\thanks{\IEEEauthorrefmark{2} A. Girard and I. Kolmanovsky are with the Department of  Aerospace Engineering, University of Michigan, Ann Arbor, MI, USA (email: anouck@umich.edu and ilya@umich.edu).
}
\thanks{This work is supported by Ford Motor Company.}
}
\maketitle
\markboth{IEEE Transactions on Control Systems Technology} {Liu \MakeLowercase{\textit{et al.}}: Potential Game Based  Frameworks for Decision-Making in Autonomous Driving}
\begin{abstract}
This paper proposes a novel  decision-making framework for autonomous vehicles (AVs), called predictor-corrector potential game (PCPG), composed of a Predictor and a Corrector. To enable human-like reasoning and characterize agent interactions, a receding-horizon multi-player game is formulated. To address the challenges caused by the complexity in solving a multi-player game and by the requirement of real-time operation, a  potential game (PG) based decision-making framework is developed in the PG Predictor, where the agents' cost functions are heuristically predefined.
We acknowledge that the behaviors of other traffic agents, e.g., human-driven vehicles and pedestrians,  may not necessarily be consistent with the predefined cost functions. To address this issue, a best response based PG Corrector is designed. In the Corrector, the action deviation between the ego vehicle prediction and the surrounding agents' actual behaviors are measured and are fed back to the ego vehicle decision-making, to correct the prediction errors caused by the inaccurate predefined cost functions and to improve the ego vehicle strategies.  

Distinguished from most existing game-theoretic approaches, this PCPG  1) deals with multi-player games and guarantees  the existence of a pure-strategy Nash equilibrium (PSNE) and the convergence of the PSNE seeking algorithm; 2) is computationally scalable in a multi-agent scenario; 3) guarantees the ego vehicle safety under suitable conditions; and 4)  approximates the actual PSNE of the system  despite the unknown cost functions of others. Comparative studies between the PG, the PCPG, and the control barrier function (CBF) based approaches are conducted in diverse traffic  scenarios, including oncoming traffic scenario  and multi-vehicle intersection-crossing scenario. The results from validation case studies based on a naturalistic dataset are reported.     
\end{abstract}
\begin{IEEEkeywords}
Autonomous driving, decision-making, game theory, potential games, model predictive control 
\end{IEEEkeywords}
\section{Introduction}

Fully autonomous vehicles (AVs) {are expected to improve} safety, mobility, accessibility, and reduce energy consumption \cite{online2}. {However, their broad introduction still faces significant challenges} \cite{online3}. One of the key technical challenges lies in the design of AV decision-making algorithms, which aim to generate safe, reliable, and intelligent decisions for AVs in diverse and complex traffic scenarios. 
An ideal AV decision-making framework  is expected to have the following properties.
\begin{enumerate}
    \item \textit{Safety}: Theoretical safety guarantees are desirable, if a safe solution exists;
    \item \textit{Interpretability}: The decision-making process is expected to be interpretable {by} humans to build trust between humans and AVs;
    \item \textit{Applicability}: The algorithm should be applicable to various traffic scenarios and {have} the capability  to handle unfamiliar or even previously unseen situations;
    \item \textit{Scalability}: The algorithm is computationally scalable to handle a large number of traffic agents;
    \item \textit{Intelligence}:  Human-like negotiating behaviors and reasoning are desirable, especially when AVs interact with human drivers and/or pedestrians.
\end{enumerate}

The existing approaches to AV decision-making can be largely grouped into two categories: model-based and data-driven. Model-based approaches \cite{RSS,cbf,logical_1,logical_2,MDP,fuzzy}, in general, have good explainability  and have the potential to provide  safety guarantees under certain conditions ({this is the case for} responsibility-sensitive safety (RSS) \cite{RSS} and control barrier function (CBF) \cite{cbf} -based approaches). However, the downside is that they often {lead to} conservative AV behaviors \cite{rss_conservative}, are highly dependent on assumed model parameters\cite{RSS_parameter}, and lack the applicability to diverse traffic scenarios \cite{applicability}. Data-driven approaches, on the other hand, take advantage of naturalistic traffic datasets \cite{Ford_1,14,15} and have the capability to generate human-like behaviors \cite{humanlike_learning} in certain scenarios. However, their performance {is highly reliant} on the quality of the training datasets, lacking the assurance in dealing with unfamiliar or unseen situations.  Moreover, since deep neural networks (DNNs) are usually employed, especially in the end-to-end learning \cite{endtoend,endtoend2}, {the lack of transparency in the decision-making process becomes a concern in achieving intrinsic interpretability \cite{explainable_2} and in building trust and confidence in AVs \cite{explainable_1}.} 
In addition, theoretical safety guarantees are generally not available in purely data-driven approaches.

Game-theoretic approaches have the potential to combine the advantages of the model-based and the data-driven {approaches}  \cite{human,my_game,my_graphicalgame,Victor,mine_1,online_payoff}. Given the cost function of each agent, a game-theoretic decision-making  is consistent with human reasoning: Human behaviors are naturally motivated by {pursuing} their own interests {while accounting for interactions with others} \cite{human}. On the other hand, data-driven approaches can {facilitate} the AV cost function design, leading to human-like behaviors. Examples include  supervised learning, reinforcement learning (RL) with value function approximation, and inverse RL \cite{supervisedlearning,inverse_RL,RL_VFA}. However, conventional game-theoretic approaches often suffer from scalability issues and from the lack of knowledge {of} the surrounding agents' cost functions that reflect the variability in human driving behaviors. 

To address the scalability challenge, pairwise games have been widely-adopted in the literature \cite{Victor,deadlock}. {In these kinds of approaches}, the ego vehicle is {assumed} to play multiple $2$-player games, instead of one multi-player game, and the most conservative outcomes are selected as the final decision. Such pairwise games {may} lead to  conservative AV behaviors and  result in {a} deadlock even in an ideal environment where all agents employ the pairwise games \cite{deadlock}. 

To address the challenge {of} unknown surrounding agents' cost functions,  attempts have been made to learn the parameters of the cost function in real time \cite{dimitar_dynamics_2,dimitar_dynamics_3,online_payoff}. These parameters usually characterize driving style \cite{supervisedlearning}, aggressiveness \cite{online_payoff}, or social value orientation (SVO) \cite{sov,closedloop1}. 
However, because the time duration of traffic agents' interaction is usually short, the amount of data {may not be} sufficient to guarantee the online learning performance. An inaccurate {estimate of the} cost can mislead the ego vehicle to perform undesirable or even dangerous {maneuvers} \cite{mine_1}.

To address the above limitations and enable the AV decision-making to meet the expectations listed in the first paragraph, we propose a novel predictor-corrector potential game (PCPG) framework, composed of two main components: a PG Predictor and a PG Corrector. The Predictor solves a multi-player game with predefined cost functions to find the optimal strategy for the ego vehicle while taking into consideration the interactions with other traffic agents. 
The Corrector aims to correct the prediction error caused by the inaccurate predefined cost functions of others and to improve the ego vehicle decision-making.  
With this PCPG:  
\begin{itemize}
    \item the existence of a  pure-strategy Nash equilibrium (PSNE) and the convergence of the solution seeking algorithm are guaranteed;
    \item the computational scalability challenge is addressed; 
    \item the ego vehicle safety is guaranteed under suitable conditions. These conditions are detailed in Section V; 
    \item the optimal solution for the ego vehicle can be approximated despite unknown cost functions of others. 
\end{itemize} 

To summarize, this PCPG framework  inherits the appealing properties of game-theoretic approaches, including human-like reasoning and explicit agent interaction characterization, {and} addresses the computational scalability and  lack of information challenges, improving the  practicability and reliability of applying game-theoretic approaches to autonomous driving. 

This paper is organized as follows. Section \ref{preliminary} introduces preliminaries to facilitate the analysis in this paper. Section \ref{problem_formulation} formulates the AV decision-making problem as a receding-horizon multi-player game problem. Section \ref{PCPG} proposes the PCPG. Section \ref{performance} analyzes the  PCPG performance, including safety and optimality. Section \ref{simulation} conducts numerical studies in specific traffic scenarios and reports validations using a naturalistic dataset. Section \ref{conclusion} concludes the paper.  
  
\section{Preliminaries}\label{preliminary}

Let $\mathbb{Z}_{+}(\mathbb{Z}_{++})$  denote the set of non-negative (positive) integers and $\mathbb{R}_{+}(\mathbb{R}_{++})$  denote the set of non-negative (positive) real numbers. 

\begin{defn} [Lipschitz continuous function \cite{book_lf}] \label{d_lf} 
Suppose $(\mathcal{Y},d_{\mathcal{Y}})$ and $(\mathcal{Z},d_{\mathcal{Z}})$ are metric spaces and $g:\mathcal{Y}\rightarrow \mathcal{Z}$. If there exists $K\in\mathbb{R}_{+}$ such that 
\begin{equation}
    d_{\mathcal{Z}}\left(g(y_1),g(y_2)\right)\leq K\cdot d_{\mathcal{Y}}(y_1,y_2) \quad \forall y_1,y_2\in \mathcal{Y},
\end{equation} then $g$ is called a Lipschitz continuous function on $\mathcal{Y}$ with Lipschitz constant $K$.
\end{defn}

To introduce the preliminaries on game theory, let us
consider a strategic-form game $\mathcal{G}=\{\mathcal{N},\mathcal{A},\{J_i\}_{i\in\mathcal{N}}\}$. Here $\mathcal{N}=\{1,2,...,N\}$ is the set of players (or agents), $\mathcal{A}=\mathcal{A}_1\times \mathcal{A}_2\times \dots \times \mathcal{A}_N$ with $\mathcal{A}_i$ representing the strategy space of player $i$, and $J_i: \mathcal{A}\to \mathbb{R}$ is the cost function of player $i$. Denote {by} $\mathcal{N}_{-i}$ the set of all agents except for agent $i$. We let $\mathbf{a}_i\in\mathcal{A}_i$ represent the strategy of agent $i$ and $\mathbf{a}_{-i}\in\mathcal{A}_{-i}$ represent the set of strategies of all other agents except for agent $i$, i.e., $\mathbf{a}_{-i}=\{\mathbf{a}_1,\dots,\mathbf{a}_{i-1},\mathbf{a}_{i+1},\dots,\mathbf{a}_N\}$ with $\mathcal{A}_{-i}$ being the  domain of $\mathbf{a}_{-i}$. Denote $\mathbf{a}=\{\mathbf{a}_i,\mathbf{a}_{-i}\}\in\mathcal{A}$. Similarly, let $J_{-i}=\{J_1,\dots,J_{i-1},J_{i+1},\dots,J_N\}$ and $J=\{J_i,J_{-i}\}$. 

\begin{defn} [Best Response \cite{game_book}] \label{d1} 
Agent $i$'s best response to other agents' fixed strategies $\mathbf{a}_{-i}\in\mathcal{A}_{-i}$ is defined as the strategy $\mathbf{a}_i^*$ such that
\begin{equation}
    J_i( \mathbf{a}_i^*, \mathbf{a}_{-i})\leq J_i(\mathbf{a}_i, \mathbf{a}_{-i}) \quad \forall \mathbf{a}_i\in\mathcal{A}_i.
\end{equation}
\end{defn}

\begin{defn} [Pure-Strategy Nash Equilibrium \cite{game_book}] \label{d2}
An $N$-tuple of strategies (or strategy profile) $\{\mathbf{a}_1^*,\mathbf{a}_2^*,...,\mathbf{a}_N^*\}$  is  a pure-strategy Nash equilibrium for an $N$-player game if and only if   
\begin{equation}\label{Nash_de}
 J_i(\mathbf{a}_i^*,\mathbf{a}_{-i}^*)\leq J_i(\mathbf{a}_i,\mathbf{a}_{-i}^*)\quad  \forall \mathbf{a}_i\in\mathcal{A}_i, \forall i\in\mathcal{N}.
\end{equation}
\end{defn}
Equation \eqref{Nash_de} implies that  if all agents play their best response, then a PSNE is achieved, and if a PSNE is achieved, then no player would have the incentive to change its strategy. 

Next we define a special class of games, called \textit{Continuous Potential Game}. 
{
\begin{assump}\label{As1}
$\mathcal{A}_i$ is a connected set, $\mathcal{A}_i\neq \emptyset$, $\forall i\in\mathcal{N}$, and $\mathcal{A}$ is compact, i.e., closed and bounded.
\end{assump}}

\begin{defn} [Continuous Potential Game \cite{my_potential}] \label{d5}
{Under Assumption \ref{As1},} let $J_i$ be everywhere differentiable  on an open superset of $\mathcal{A}$. The game $\mathcal{G}$
is called a continuous potential game if and only if there exists a {function}  $F:\mathcal{A}\rightarrow\mathbb{R}$ such that $F$ is  everywhere differentiable  on an open superset of $\mathcal{A}$, and 
\begin{equation}\label{defi_c}
    \frac{ \partial J_i(\mathbf{a}_i,\mathbf{a}_{-i})}{\partial \mathbf{a}_i}=\frac{\partial F(\mathbf{a}_i,\mathbf{a}_{-i})}{\partial \mathbf{a}_i}
\end{equation}
holds $\forall \mathbf{a}_i\in\mathcal{A}_i$,  $\forall \mathbf{a}_{-i}\in\mathcal{A}_{-i}$, and  $\forall i\in\mathcal{N}$. The function $F$ is called the potential function.
\end{defn}

Throughout this paper, the term PG  refers to a continuous PG, and the word ``continuous" may be omitted. 

A PG has appealing properties. We {summarize} two of them in the following lemmas. 

\begin{lemma}\label{l1}[Existence of PSNE \cite{potential_book, my_potential}]  {Under  Assumption \ref{As1}}, if $\mathcal{G}$ is a PG, then it has  at least one PSNE. Moreover, if the potential function $F$ is strictly convex, then the PSNE is unique. 
\end{lemma}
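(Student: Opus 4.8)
The plan is to exploit the variational characterization of a Nash equilibrium and reduce the existence question to a standard constrained optimization of the potential function $F$ over the compact set $\mathcal{A}$. First I would argue that since $F$ is everywhere differentiable on an open superset of the compact set $\mathcal{A}$ (and hence continuous on $\mathcal{A}$), the Weierstrass extreme value theorem guarantees that $F$ attains a global minimum on $\mathcal{A}$ at some point $\mathbf{a}^*=\{\mathbf{a}_1^*,\dots,\mathbf{a}_N^*\}$. The claim is then that this minimizer $\mathbf{a}^*$ is a PSNE of $\mathcal{G}$. The intuition is that the defining identity \eqref{defi_c} says $J_i$ and $F$ differ (as functions of $\mathbf{a}_i$, with $\mathbf{a}_{-i}$ held fixed) by a term independent of $\mathbf{a}_i$, so minimizing $F$ over the full profile in particular forces each block $\mathbf{a}_i^*$ to be optimal for $J_i(\cdot,\mathbf{a}_{-i}^*)$.

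The key steps, in order, are: (i) invoke compactness of $\mathcal{A}$ and continuity of $F$ to obtain a global minimizer $\mathbf{a}^*$; (ii) fix an arbitrary agent $i$ and consider the restricted function $\mathbf{a}_i\mapsto F(\mathbf{a}_i,\mathbf{a}_{-i}^*)$ on $\mathcal{A}_i$ --- since $\mathbf{a}^*$ is a global minimizer of $F$ on $\mathcal{A}$, $\mathbf{a}_i^*$ is a minimizer of this restriction on $\mathcal{A}_i$; (iii) use \eqref{defi_c} together with the connectedness of $\mathcal{A}_i$ to conclude that $J_i(\mathbf{a}_i,\mathbf{a}_{-i}^*)-F(\mathbf{a}_i,\mathbf{a}_{-i}^*)$ is constant in $\mathbf{a}_i$ on $\mathcal{A}_i$, so that $\mathbf{a}_i^*$ also minimizes $J_i(\cdot,\mathbf{a}_{-i}^*)$; (iv) since $i$ was arbitrary, conclude via Definition~\ref{d2} that $\mathbf{a}^*$ is a PSNE. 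For the uniqueness addendum, if $F$ is strictly convex on $\mathcal{A}$ (convex, by Assumption~\ref{As1} being compact and, implicitly, convex in the relevant setting), its global minimizer is unique; I would then show that \emph{every} PSNE must be a stationary point / minimizer of $F$ --- at a PSNE $\mathbf{a}^*$ each $\mathbf{a}_i^*$ minimizes $J_i(\cdot,\mathbf{a}_{-i}^*)$, so first-order conditions give $\partial J_i/\partial \mathbf{a}_i=0$ (in the interior case) or the appropriate variational inequality on the boundary, which by \eqref{defi_c} is exactly the first-order optimality condition for $F$ at $\mathbf{a}^*$; strict convexity then pins down $\mathbf{a}^*$ uniquely.

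The main obstacle is step (iii): passing from the gradient identity \eqref{defi_c} to the statement that $J_i-F$ is genuinely constant along $\mathcal{A}_i$, and then handling the fact that $\mathbf{a}_i^*$ may lie on the boundary of $\mathcal{A}_i$ rather than in its interior. Connectedness of $\mathcal{A}_i$ is what rules out the function being ``locally constant with different constants on different pieces,'' but to integrate the vanishing partial derivative cleanly one wants a path-connected (or at least suitably nice) domain; I would either invoke that $\mathcal{A}_i$ is assumed convex (a common standing assumption in this literature, and consistent with the strict-convexity statement for $F$) so that line segments furnish the integration paths, or appeal to the cited reference \cite{my_potential} for the technical version of this fact. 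The boundary issue is handled by noting that the minimization argument in step (ii) is intrinsic --- $\mathbf{a}_i^*$ minimizes the restricted function over all of $\mathcal{A}_i$, interior or not --- so no first-order condition is actually needed for the existence half; first-order conditions only enter the uniqueness half, where the variational-inequality form of optimality combined with \eqref{defi_c} and strict convexity of $F$ closes the argument.
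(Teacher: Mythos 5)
The paper does not actually prove Lemma~\ref{l1}; it is imported verbatim from the cited reference \cite{my_potential}, so there is no in-paper argument to compare against. Your proposal is the canonical proof of this result and is essentially sound: Weierstrass gives a global minimizer $\mathbf{a}^*$ of $F$ on the compact set $\mathcal{A}$; blockwise restriction shows $\mathbf{a}_i^*$ minimizes $F(\cdot,\mathbf{a}_{-i}^*)$ over $\mathcal{A}_i$; and the identity \eqref{defi_c} lets you transfer that optimality to $J_i(\cdot,\mathbf{a}_{-i}^*)$, yielding a PSNE by Definition~\ref{d2}. The two caveats you flag yourself are the only real soft spots, and you handle them appropriately. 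First, deducing that $J_i-F$ is \emph{constant} in $\mathbf{a}_i$ from the vanishing of its partial gradient genuinely requires more than topological connectedness of $\mathcal{A}_i$ (one needs to integrate along paths, so path-connectedness or convexity); this is a standing assumption in the potential-game literature and in \cite{my_potential}, not something recoverable from Assumption~\ref{As1} alone. Second, the uniqueness half needs each $\mathcal{A}_i$ convex so that (a) strict convexity of $F$ on $\mathcal{A}$ is even meaningful and (b) the blockwise variational inequalities at a PSNE can be summed, exploiting the product structure of $\mathcal{A}$, to give the global first-order condition $\left\langle \nabla F(\mathbf{a}^*),\, \mathbf{a}-\mathbf{a}^*\right\rangle\geq 0$ for all $\mathbf{a}\in\mathcal{A}$, whence $\mathbf{a}^*$ is the unique global minimizer; your sketch captures this correctly. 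In short: correct argument, correctly identified hypotheses that the paper leaves implicit.
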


\begin{lemma}\label{l2}[Equivalence of Nash {equilibria} sets \cite{potential_book, my_potential}]
{Under  Assumption \ref{As1}}, if $\mathcal{G}$  is a PG, then the set of PSNE coincides with the set of PSNE of the identical-interest game $\mathcal{G}'=\{\mathcal{N},\mathcal{A},\{F\}_{i\in\mathcal{N}}\}$ with $F$ being the potential function. That is,
\begin{equation}
    \text{NESet}(\mathcal{G})=\text{NESet}(\mathcal{G}'),
\end{equation}
where $\text{NESet}$ denotes the set of PSNE.
\end{lemma}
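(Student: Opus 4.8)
The plan is to prove the set equality by showing that the two games induce \emph{identical best-response correspondences} for every player; the equality of Nash-equilibrium sets then follows at once from Definition~\ref{d2} and the remark after it (a profile is a PSNE iff every player plays a best response to the others). The bridge is the standard structural fact about continuous potential games: each player's cost function differs from the potential $F$ only by a term that does not depend on that player's own strategy.

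\emph{Step 1 (additive decomposition).} Fix $i\in\mathcal{N}$ and $\mathbf{a}_{-i}\in\mathcal{A}_{-i}$, and set $h_i(\mathbf{a}_i)=J_i(\mathbf{a}_i,\mathbf{a}_{-i})-F(\mathbf{a}_i,\mathbf{a}_{-i})$. By Definition~\ref{d5}, $J_i$ and $F$ are differentiable on an open superset of $\mathcal{A}$, and \eqref{defi_c} yields $\partial h_i(\mathbf{a}_i)/\partial \mathbf{a}_i=0$ for all $\mathbf{a}_i\in\mathcal{A}_i$. Invoking the connectedness of $\mathcal{A}_i$ in Assumption~\ref{As1}, I would join any two points of $\mathcal{A}_i$ by a (piecewise-$C^1$) path contained in $\mathcal{A}_i$ and integrate: the chain rule and the fundamental theorem of calculus give that $h_i$ is constant on $\mathcal{A}_i$. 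Thus there is a function $C_i:\mathcal{A}_{-i}\to\mathbb{R}$, independent of $\mathbf{a}_i$, with
\begin{equation}\label{l2eq}
 J_i(\mathbf{a}_i,\mathbf{a}_{-i})=F(\mathbf{a}_i,\mathbf{a}_{-i})+C_i(\mathbf{a}_{-i})\qquad \forall\,\mathbf{a}_i\in\mathcal{A}_i,\ \forall\,\mathbf{a}_{-i}\in\mathcal{A}_{-i}.
\end{equation}

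\emph{Steps 2--3 (from the decomposition to equal equilibria).} Fixing $\mathbf{a}_{-i}$, the maps $\mathbf{a}_i\mapsto J_i(\mathbf{a}_i,\mathbf{a}_{-i})$ and $\mathbf{a}_i\mapsto F(\mathbf{a}_i,\mathbf{a}_{-i})$ differ by the constant $C_i(\mathbf{a}_{-i})$, hence share the same set of minimizers over $\mathcal{A}_i$; so $\mathbf{a}_i^{*}$ is a best response to $\mathbf{a}_{-i}$ under $J_i$ (Definition~\ref{d1}) if and only if it is a best response to $\mathbf{a}_{-i}$ under $F$, i.e., player $i$'s best-response correspondence is the same in $\mathcal{G}$ and in $\mathcal{G}'=\{\mathcal{N},\mathcal{A},\{F\}_{i\in\mathcal{N}}\}$. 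Applying this for every $i\in\mathcal{N}$, a profile $\mathbf{a}^{*}=\{\mathbf{a}_i^{*},\mathbf{a}_{-i}^{*}\}$ satisfies \eqref{Nash_de} for $\mathcal{G}$ iff it satisfies \eqref{Nash_de} for $\mathcal{G}'$, which is exactly $\text{NESet}(\mathcal{G})=\text{NESet}(\mathcal{G}')$.

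The delicate step is the additive decomposition of Step 1: \eqref{defi_c} supplies only a \emph{pointwise} gradient identity on $\mathcal{A}_i$, so the passage to the \emph{global} identity \eqref{l2eq} genuinely relies on connectedness (a vanishing gradient is only locally constant), and one must ensure the interpolating path stays inside $\mathcal{A}_i$, where the gradient identity is known to hold. I would also stress that, since a PSNE requires \emph{global} minimization of each $J_i(\cdot,\mathbf{a}_{-i})$ over $\mathcal{A}_i$ rather than mere stationarity, the argument must use the ``equal up to an $\mathbf{a}_i$-independent term'' structure of \eqref{l2eq} and cannot be reduced to comparing critical points of $J_i$ and $F$. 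Finally, note that only the connectedness (and nonemptiness) of the $\mathcal{A}_i$ is used here, not the compactness of $\mathcal{A}$.
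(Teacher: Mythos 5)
Your proof is correct, and it is the standard argument for this fact: the paper itself gives no proof of Lemma~\ref{l2} (it is imported from \cite{my_potential}), and the cited source establishes it exactly as you do, by upgrading the gradient identity \eqref{defi_c} to the additive decomposition $J_i(\mathbf{a}_i,\mathbf{a}_{-i})=F(\mathbf{a}_i,\mathbf{a}_{-i})+C_i(\mathbf{a}_{-i})$ via connectedness of $\mathcal{A}_i$, so that best-response correspondences, and hence PSNE sets, coincide. The only point to tighten is that joining two points of $\mathcal{A}_i$ by a piecewise-$C^1$ path \emph{inside} $\mathcal{A}_i$ needs slightly more than bare connectedness (a connected set need not be path-connected); in practice $\mathcal{A}_i$ is convex or at least open-and-connected in its ambient space, which you should state if you want Step~1 to be airtight.
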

\section{Problem formulation}\label{problem_formulation}
Consider a set of traffic agents $\mathcal{N}$ {represented by the following discrete time models:}
\begin{equation}\label{dynamics}
    x_i(t+1)=f_i(x_i(t),a_i(t)),
\end{equation}
where  $x_i(t)\in\mathcal{X}_i\subseteq \mathbb{R}^{n_i}$ and $a_i(t)\in\mathcal{U}_i\subseteq \mathbb{R}^{m_i}$ are, respectively, the state and action of agent $i$ at the time step $t$, $f_i$ is the system evolution model, and $i\in\mathcal{N}$. Denote $x_{-i}=\{x_1,\dots,x_{i-1},x_{i+1},\dots,x_N\}$, $x=\{x_i,x_{-i}\}$, $a_{-i}=\{a_1,\dots,a_{i-1},a_{i+1},\dots,{a}_N\}$, $a=\{a_i,a_{-i}\}$, and $f=\{f_1,f_2,\dots,f_N\}$. Denote the dimension of $a_{-i}$ as $m_{-i}$, i.e., $a_{-i}(t)\in \mathbb{R}^{m_{-i}}$. 

In autonomous driving applications, $x_i$ usually contains agent $i$'s  position, velocity, and heading angle information, and $a_i$ is the acceleration and/or the angular velocity. These signals can be measured by/estimated from vehicle-mounted sensors and have been used in driver assistance systems implemented in real vehicles \cite{tesla}. The system evolution model $f_i$ represents agent dynamics. Examples of $f_i$ include single-mass model \cite{add_dynamics_RL,add_dynamics_nan}, unicycle model \cite{add_dynamics_RTD}, and bicycle model \cite{bicycle}.
\begin{assump}
At time $t$, agent $i$ has access to $x(t)$, $a(t-1)$ and $f$.
\end{assump}

In a driving scenario, every traffic agent has its own driving objective, e.g., tracking a desired trajectory without {collisions} and with ride comfort. We use the cost function $J_i$ to characterize agent $i$'s objective and formulate the decision-making problem as a receding horizon optimal control problem. That is, at each $t$, agent $i$ aims to find its optimal action sequence (also called strategy, or behavior) $\mathbf{a}^*_i(t)$ such that
\begin{equation}\label{optimization}
    \begin{split}
    \mathbf{a}^*_i(t)&\in \argmin_{\mathbf{a}_i(t)\in\mathcal{A}_i}J_i(\mathbf{a}_i(t),\mathbf{a}_{-i}(t))
    \\
    &
    =\argmin_{\mathbf{a}_i(t)\in\mathcal{A}_i} \sum_{\tau=t}^{t+T-1}\Psi_i(x_i(\tau),x_{-i}(\tau),a_i(\tau),a_{-i}(\tau)),
    \end{split}
\end{equation}
where $\mathbf{a}_i(t)=\{a_i(t),a_i(t+1),\dots,a_i(t+T-1)\}\in\mathcal{A}_i$, $\mathcal{A}_i$ is determined by $\mathcal{U}_i$, $\Psi_i$ is the instantaneous cost at one time instant,  $T\in\mathbb{Z}_{++}$ is the horizon length, and $J_i:\mathcal{A}\rightarrow\mathbb{R}$ is agent $i$'s cost function. The expression of $J_i(\mathbf{a}_i(t),\mathbf{a}_{-i}(t))$ can be different at different $t$ due to its state dependence. With a slight abuse of notation, we do not distinguish $J_i$ at different $t$ in this paper, as it is clear from the context. After deriving $\mathbf{a}^*_i(t)$, agent $i$ implements the first element
 $a^*_i(t)$ and repeats the same procedure at the next time instant, $t+1$.

In \eqref{optimization}, $J_i$ depends on both $\mathbf{a}_i$ and $\mathbf{a}_{-i}$, indicating that agent $i$'s driving performance is affected by not only its own actions but also the actions of other traffic agents, characterizing agent interactions. At each $t$, if every {agent} $i$ aims to optimize its performance $J_i$, then the multi-agent optimization \eqref{optimization} becomes a multi-player game, and the set of all agents' optimal strategies $\{\mathbf{a}_1^*(t),\mathbf{a}_2^*(t),...,\mathbf{a}_N^*(t)\}$, if nonempty, composes a PSNE according to Definition \ref{d2}.

\begin{remark}
The receding horizon game \eqref{optimization} is consistent with human driver decision-making in the following sense. 
\begin{enumerate}
    \item Human driver actions are generally motivated by a foreseen gain or loss within a finite prediction horizon. For example, a braking maneuver is usually motivated by a foreseen collision in a few seconds if the current speed is continued, consistent with the predictive control {setting of} \eqref{optimization}.
    \item Human drivers adjust their strategies frequently according to the latest information. For example, if  a human driver is cut off by a car,  he/she may have to abort the previous strategy (e.g., keeping speed) and take actions to avoid collision. Such real-time adjustment is achieved in \eqref{optimization} {since} a new strategy is planned at every $t$. 
    \item Human driver strategies are  affected by {the} surrounding traffic agents. An experienced human driver does not simply react to the actions of others, instead, he/she  predicts others' behaviors by considering their driving objectives/interests, to benefit his/her own decision-making. This feature is represented in \eqref{optimization} by the coupled agent cost functions and by the game-theoretic formulation.   
\end{enumerate}
Such optimization-based  approaches to represent human decision-making during driving have been validated in the literature with naturalistic datasets \cite{validation_1,validation_2,validation_3,validation_4,payoff1,closedloop1}. 
\end{remark}

The {practical use of the} formulation \eqref{optimization}, although consistent with human reasoning, {presents several challenges}:
\begin{itemize}
    \item \textbf{\textit{Existence of solution:}} Given arbitrary $J_i$,  a strategy profile $\{\mathbf{a}_1^*(t),\mathbf{a}_2^*(t),...,\mathbf{a}_N^*(t)\}$ that satisfies   \eqref{optimization} for all $i\in\mathcal{N}$ may not always exist.
    \item \textbf{\textit{Convergence of algorithm:}} Even if a solution exists, a solution seeking algorithm, e.g., best- and better- response dynamics, may not necessarily converge.
    \item \textbf{\textit{Nonuniqueness of PSNE:}} The game \eqref{optimization} may have multiple solutions, i.e., multiple PSNE.  
    {Selecting the one that is preferable to others can also be challenging.}  
    \item \textbf{\textit{Computational scalability:}} Solving \eqref{optimization} generally requires multiple and iterative optimizations, resulting in high computational burden when $N$ is large.
    \item \textbf{\textit{Lack of information:}} To solve \eqref{optimization}, the ego vehicle needs to know everyone's cost function, which  {may not be realistic in a traffic setting}.  
\end{itemize}

To differentiate the ego vehicle and its surrounding agents, from now on, we use agent $i$ to represent the ego vehicle, and the set $\mathcal{N}_{-i}$ to represent its surrounding agents, which may include AVs, human-driven vehicles, and pedestrians. 

\begin{remark}
Although agent $i$ does not know $J_{-i}$, it may have access to a typical human driver or pedestrian cost function, calibrated using naturalistic traffic datasets \cite{inverse_RL,payoff1}. Such a typical cost function corresponds to ``common" or ``general" human behaviors.  However, given a specific human driver, his/her behaviors may not necessarily be consistent with  the predetermined typical cost function, as human drivers often have distinct driving habits and styles, and their behaviors can hardly be described by one universal cost function. 
\end{remark}

Let $\hat{J}_j: \mathcal{A}\to \mathbb{R}$ be a predefined typical cost function for agent $j$.  $\hat{J}_j$ may not be equal to $J_j$. 

This paper aims to develop a decision-making framework to solve \eqref{optimization} and to address the aforementioned challenges. 

\begin{figure*}
    \centering
    \includegraphics[width=0.8\textwidth]{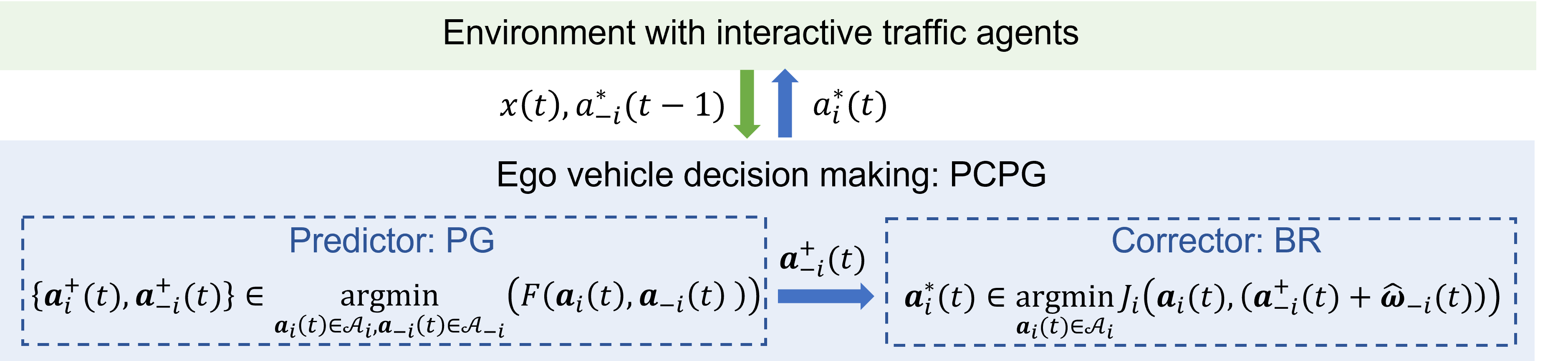}
    \caption{Diagram of the predictor-corrector potential game at time $t$. }
    \label{fig1}
\end{figure*}

\section{Predictor-Corrector Potential Game}\label{PCPG}
This section proposes a predictor-corrector potential game {approach} to solve \eqref{optimization}. The predictor-corrector structure is inspired by a previous work on CBF-based collision avoidance \cite{pcca}. Here we apply this structure in a game-theoretic setting to address the challenges caused by unknown cost functions of other agents. The Predictor {solves}  \eqref{optimization} with the predefined cost function $\hat{J}_j$, and the Corrector  {corrects} the prediction error caused by $\hat{J}_j\neq {J}_j$. The  diagram of the PCPG framework is shown in Figure \ref{fig1}.  


\subsection{Predictor}
In the Predictor step, the ego vehicle assumes that the surrounding agents' behaviors are generated from $\hat{J}_j$. Therefore, it solves the following $N$-player game at each $t$, 
\begin{equation}\label{predictor}
    \begin{split}
    \mathbf{a}^+_j(t)&\in \argmin_{\mathbf{a}_j(t)\in\mathcal{A}_j}\hat{J}_j(\mathbf{a}_j(t),\mathbf{a}_{-j}(t))
    \end{split}
\end{equation}
where $j\in\mathcal{N}$, $\mathbf{a}^+_j(t)$ is the predicted agent $j$'s action sequence at $t$, and $\hat{J}_i=J_{i}$. 

\begin{assump}\label{A2}
$\hat{J}_j$ is  everywhere differentiable  on an open superset of $\mathcal{A}$, $\forall j\in\mathcal{N}$.
\end{assump}

To ensure a PSNE always exists and is obtainable in real time, we formulate the game \eqref{predictor} as a PG. 

\begin{thm}[Theorem 6 in \cite{my_potential}]\label{t1}
If the cost function $\hat{J}_j$ in \eqref{predictor} satisfies
\begin{equation}\label{cost_form}
\begin{split}
    &\hat{J}_j(\mathbf{a}_j(t),\mathbf{a}_{-j}(t))\\
    &=\alpha J_j^{self}(\mathbf{a}_j(t))+\beta \sum_{k\in\mathcal{N},k\neq j} J_{jk}(\mathbf{a}_j(t),\mathbf{a}_{k}(t)),
\end{split}
\end{equation}
where  $J_j^{self}: \mathcal{A}_j\to \mathbb{R}$ is a function {determined} solely by agent $j$'s action, $J_{jk}: \mathcal{A}_j\times\mathcal{A}_k\to \mathbb{R}$ satisfies
\begin{equation}
\begin{split}
  &\quad  J_{jk}(\mathbf{a}_j(t),\mathbf{a}_{k}(t))=J_{kj}(\mathbf{a}_k(t),\mathbf{a}_{j}(t)), \\
  &\forall j,k\in\mathcal{N}, j\neq k, \text{ and } \forall \mathbf{a}_j\in \mathcal{A}_j, \mathbf{a}_{k}\in \mathcal{A}_k,
    \end{split}
\end{equation}
and $\alpha$ and $\beta$ are two {real numbers}. Then the game \eqref{predictor} is a PG with the following potential function, 
\begin{equation}\label{potential_design}
\begin{split}
        & F(\mathbf{a}(t))\\
    & =\alpha\sum_{j\in\mathcal{N}}J_j^{self}(\mathbf{a}_j(t))+\beta\sum_{j\in \mathcal{N}}\sum_{k\in\mathcal{N},k< j} J_{jk}(\mathbf{a}_j(t),\mathbf{a}_{k}(t)).
\end{split}
\end{equation}
\end{thm}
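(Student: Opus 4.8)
The plan is to verify the pair consisting of the game \eqref{predictor} and the function $F$ of \eqref{potential_design} against Definition \ref{d5} directly. Assumption \ref{As1} is in force, and $\hat{J}_j$ is everywhere differentiable on an open superset of $\mathcal{A}$ by Assumption \ref{A2}; since $F$ in \eqref{potential_design} is a finite sum of the building blocks $J_j^{self}$ and $J_{jk}$, each of which is differentiable, $F$ is also everywhere differentiable on an open superset of $\mathcal{A}$. Hence the only substantive thing to establish is the gradient-matching identity \eqref{defi_c}, i.e., $\partial \hat{J}_j/\partial \mathbf{a}_j = \partial F/\partial \mathbf{a}_j$ for every $j\in\mathcal{N}$ and every $\mathbf{a}\in\mathcal{A}$; the values of the scalars $\alpha,\beta$ play no role beyond linearity of differentiation.

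First I would differentiate the cost. From \eqref{cost_form}, since $J_j^{self}$ depends only on $\mathbf{a}_j$,
\begin{equation*}
\frac{\partial \hat{J}_j(\mathbf{a}_j,\mathbf{a}_{-j})}{\partial \mathbf{a}_j}=\alpha\,\frac{\partial J_j^{self}(\mathbf{a}_j)}{\partial \mathbf{a}_j}+\beta\!\!\sum_{k\in\mathcal{N},\,k\neq j}\!\!\frac{\partial J_{jk}(\mathbf{a}_j,\mathbf{a}_k)}{\partial \mathbf{a}_j}.
\end{equation*}
Next I would differentiate $F$ from \eqref{potential_design} with respect to $\mathbf{a}_j$. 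In $\alpha\sum_{l\in\mathcal{N}}J_l^{self}(\mathbf{a}_l)$ only the term $l=j$ contributes, giving $\alpha\,\partial J_j^{self}(\mathbf{a}_j)/\partial \mathbf{a}_j$. In $\beta\sum_{l\in\mathcal{N}}\sum_{k<l}J_{lk}(\mathbf{a}_l,\mathbf{a}_k)$, the summands depending on $\mathbf{a}_j$ are exactly those with $l=j$ (and then $k<j$) and those with $k=j$ (and then $l>j$), and each index $k\neq j$ occurs in precisely one of these two groups. Applying the symmetry hypothesis $J_{lj}(\mathbf{a}_l,\mathbf{a}_j)=J_{jl}(\mathbf{a}_j,\mathbf{a}_l)$ to rewrite the $k=j$ terms, the two groups merge into the single sum $\beta\sum_{k\neq j}\partial J_{jk}(\mathbf{a}_j,\mathbf{a}_k)/\partial \mathbf{a}_j$. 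Adding the two contributions reproduces exactly the expression obtained for $\partial \hat{J}_j/\partial \mathbf{a}_j$, which verifies \eqref{defi_c}, and therefore \eqref{predictor} is a PG with potential $F$.

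The step needing the most care — and the one I would regard as the crux — is the combinatorial accounting inside the ordered double sum $\sum_{l}\sum_{k<l}$ of $F$: one must confirm that taking $\partial/\partial \mathbf{a}_j$ picks up each interaction pair $\{j,k\}$ exactly once, and with the index order ($J_{lj}$ versus $J_{jl}$) in which it actually appears, so that the symmetry relation can be invoked to align the term with the corresponding term of $\partial \hat{J}_j/\partial \mathbf{a}_j$. The remaining ingredients — linearity of the derivative, the fact that $J_l^{self}$ for $l\neq j$ and the interaction terms not involving $j$ drop out, and the differentiability bookkeeping under Assumptions \ref{As1} and \ref{A2} — are routine.
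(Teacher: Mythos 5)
Your proof is correct. Note first that the paper itself contains no proof of this statement: Theorem \ref{t1} is imported wholesale by citation (``Theorem 6 in \cite{my_potential}''), so there is no in-paper argument to compare against. Your direct verification against Definition \ref{d5} is the standard argument one would expect the cited reference to contain, and you carry it out correctly. In particular, the crux you identify — the combinatorial accounting in the ordered double sum $\sum_{l}\sum_{k<l}$ — is handled properly: for fixed $j$, the pairs $\{j,k\}$ with $k<j$ enter $F$ as $J_{jk}(\mathbf{a}_j,\mathbf{a}_k)$ and those with $k>j$ enter as $J_{kj}(\mathbf{a}_k,\mathbf{a}_j)$, each exactly once; the symmetry hypothesis then rewrites the latter group so that the full sum over $k\neq j$ of $\partial J_{jk}/\partial\mathbf{a}_j$ is recovered, matching $\partial\hat{J}_j/\partial\mathbf{a}_j$ term by term. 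Since $\alpha,\beta$ enter only linearly, the identity \eqref{defi_c} holds for all $j$ and the game is a PG with potential $F$.

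One minor bookkeeping caveat: Assumption \ref{A2} asserts differentiability of $\hat{J}_j$ as a whole, not of the individual components $J_j^{self}$ and $J_{jk}$, and differentiability of a sum does not formally imply differentiability of its summands. To justify that $F$ is everywhere differentiable on an open superset of $\mathcal{A}$ (as Definition \ref{d5} requires), you should state explicitly that each $J_j^{self}$ and each $J_{jk}$ is itself differentiable — which is clearly the intended reading of the decomposition \eqref{cost_form} and is satisfied by the concrete cost functions used in Section \ref{simulation}. This does not affect the substance of your argument.
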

Theorem \ref{t1} states that if $\hat{J}_j$ can be {represented}  as a linear combination of two components: $J_j^{self}$ and  $\sum_{k\in\mathcal{N},k\neq j}J_{jk}$, then the resulting game is a PG. The required cost function form, i.e., \eqref{cost_form}, meets the autonomous driving application needs in general. Specifically, the first component $J_j^{self}$ can  model self-focused objectives, including tracking a desired speed, minimizing fuel consumption, and maintaining ride comfort. The second component  $\sum_{k\in\mathcal{N},k\neq j}J_{jk}$ can characterize symmetric pairwise agent interaction, such as a pairwise collision penalty.  See \cite{payoff1,payoff2,payoff3,dynamics} for examples where AV cost function design follows, or can be {slightly} revised to follow, the form \eqref{cost_form}, and \cite{PCPG_add1,PCPG_add2,my_potential} for examples where potential games are employed in the context of driving. 

After formulating the game as a PG, according to Lemma \ref{l1}, a PSNE always exists.  According to Lemma \ref{l2}, a global minimizer to the following optimization problem provides a PSNE to the game \eqref{predictor},
\begin{equation}\label{potential}
    \min_{\mathbf{a}(t)\in\mathcal{A}} F(\mathbf{a}(t)),
\end{equation}
where $F:\mathcal{A}\rightarrow\mathbb{R}$ is the potential function determined by \eqref{potential_design}.
Denote a global minimizer to \eqref{potential} as $\mathbf{a}^+(t)$, i.e., 
  $  \mathbf{a}^+(t)\in\argmin_{\mathbf{a}(t)\in\mathcal{A}} F(\mathbf{a}(t))$.

With \eqref{potential}, the PSNE seeking problem is transferred to a {simpler} optimization problem. 
To see the difference clearly, we write down both algorithms here. Algorithm \ref{A_br} is the best-response dynamics, which is among the most widely used algorithms to solve a multi-player game. As shown in \cite{BR}, the number of required optimizations in this algorithm can increase exponentially with the number of game players, thus not scalable. Algorithm \ref{A1} is the potential function optimization, applicable if \eqref{predictor} is a PG. This algorithm only requires one optimization, regardless of the number of game players. While this optimization problem is with a larger number of players, its size grows linearly with the number of players. The comparison of the computational times of the two algorithms in the context of 5-vehicle intersection-crossing is reported in \cite{my_potential}, showing computational advantages of PG approach. In addition, Algorithm \ref{A_br} does not always converge even if a PSNE exists, i.e., the \textit{NashCondition} in Algorithm \ref{A_br} may never become \textit{True}. If the game is a PG, then Algorithm \ref{A_br} is guaranteed to converge \cite{my_potential}.

\begin{algorithm}[t]
\caption{Best response dynamics to solve \eqref{predictor}} \label{A_br}
\hspace*{0.0in} {\bf Inputs:} \\ 
\hspace*{0.2in}Agent set  $\mathcal{N}$;  \\
\hspace*{0.2in}System state  $x(t)$;  \\
\hspace*{0.2in}System dynamics  $f$;\\
\hspace*{0.2in}Strategy space  $\mathcal{A}$;  \\
\hspace*{0.2in}Cost functions  $\hat{J}_j$, $j\in\mathcal{N}$;\\
\hspace*{0.02in} {\bf Output:} \\
\hspace*{0.2in} PSNE $\mathbf{a}^+(t)$.\\
\hspace*{0.02in} {\bf Procedures:} 
\begin{algorithmic}[1]
\STATE Set \textit{NashCondition=False}
\STATE {\bf While} \textit{NashCondition=False} {\bf do}
\STATE \hspace*{0.2in}{\bf For} $j=1,2,...,N$ {\bf do}
\STATE \hspace*{0.3in} Find $\mathbf{a}_j^+(t)$ according to 
\hspace*{0.3in} 
\begin{equation}\nonumber
    \quad\quad\quad\mathbf{a}_j^+(t)\in\argmin_{\mathbf{a}_j(t)\in\mathcal{A}_j} \hat{J}_j(\mathbf{a}_j(t),\mathbf{a}_{-j}(t)).
\end{equation}
\STATE \hspace*{0.3in} Update $\mathbf{a}_j(t)$ using $\mathbf{a}^+_j(t)$.
\STATE \hspace*{0.2in} {\bf End for}
\STATE \hspace*{0.2in}{\bf If} \\
\STATE  \hspace*{0.3in} $
    \mathbf{a}^+_j(t)\in\argmin\limits_{\mathbf{a}_j(t)\in\mathcal{A}_j} \hat{J}_j(\mathbf{a}_j(t),\mathbf{a}_{-j}^+(t))$ holds    $\forall j\in\mathcal{N}$,
\STATE  \hspace*{0.16in} {\bf Then} \\
\STATE \hspace*{0.3in} Set \textit{NashCondition=True}.
\STATE \hspace*{0.2in}{\bf End if}
\STATE {\bf End while}
\end{algorithmic}
\end{algorithm}

Therefore, by designing \eqref{predictor} as a PG according to Theorem \ref{t1}, the challenges caused by solution existence, algorithm convergence, and  computational scalability are  addressed. 

Moreover, the game \eqref{predictor} may have multiple PSNE, and Algorithm \ref{A_br} may converge to any of them, depending on the initial conditions. If using Algorithm \ref{A1}, then the outcome is not only a PSNE, but also a global minimizer of the potential function $F$. Since $F$ usually contains everyone's cost, e.g., \eqref{potential_design}, it represents a kind of ``social cost". A PSNE that is not only individually optimal for each player but also socially optimal from a global interest perspective, is naturally more preferable than others when multiple PSNE exist, addressing the challenge caused by the nonuniqueness of PSNE.   

\begin{algorithm}[t]
\caption{Potential function optimization to solve \eqref{predictor}} \label{A1}
\hspace*{0.0in} {\bf Input:} \\ 
\hspace*{0.2in}Agent set  $\mathcal{N}$;  \\
\hspace*{0.2in}System state  $x(t)$;  \\
\hspace*{0.2in}System dynamics  $f$;\\
\hspace*{0.2in}Strategy space  $\mathcal{A}$;  \\
\hspace*{0.2in}Cost functions  $\hat{J}_j$, $j\in\mathcal{N}$, that satisfy \eqref{cost_form};\\
\hspace*{0.02in} {\bf Output:} \\
\hspace*{0.2in} PSNE $\mathbf{a}^+(t)$.\\
\hspace*{0.02in} {\bf Procedures:} 
\begin{algorithmic}[1]
\STATE Find  $F(\mathbf{a}(t)) $ according to \eqref{potential_design}. 
\STATE Find $\mathbf{a}^+(t)$ such that
\begin{equation}\label{potential_optimize}
    \mathbf{a}^+(t)=\argmin_{\mathbf{a}(t)\in\mathcal{A}} F(\mathbf{a}(t)).
\end{equation}
\end{algorithmic}
\end{algorithm}

\subsection{Corrector}
 According to Definition \ref{d2}, the ego vehicle strategy generated from the Predictor, $\mathbf{a}_i^+(t)$,  is the best response to $\mathbf{a}_{-i}^+(t)$:
\begin{equation}\label{BR_predictor}
    \mathbf{a}_i^+(t)\in\argmin_{\mathbf{a}_i(t)\in\mathcal{A}_i} J_i(\mathbf{a}_i(t),\mathbf{a}_{-i}^+(t)).
\end{equation}
In other words, if  $\hat{J}_j=J_j$ $\forall j\in\mathcal{N}$, then $\mathbf{a}_i^+(t)$ is the ego vehicle's optimal strategy. 
However, {if $\hat{J}_j\neq J_j$, it is likely that}   $\mathbf{a}_{-i}^*(t)\neq \mathbf{a}_{-i}^+(t)$, where $\mathbf{a}_{-i}^*(t)=\{a_{-i}^*(t),a_{-i}^*(t+1),\cdots,a_{-i}^*(t+T-1)\}$ represents the surrounding agents' actual strategies.



Define action deviation at $t$ as 
\begin{equation}\label{actual_deviation}
    \omega_{-i}(t)=a_{-i}^*(t)-a_{-i}^+(t).
\end{equation}

In the Corrector step, we let the ego vehicle perform a best response to a corrected prediction on the surrounding agents' actions. Specifically, the ego vehicle aims to find $\mathbf{a}_i^*(t)$ such that
\begin{equation}\label{corrector}
    \mathbf{a}_i^*(t)\in\argmin_{\mathbf{a}_i(t)\in\mathcal{A}_i} J_i(\mathbf{a}_i(t),\mathbf{\hat{a}}_{-i}^*(t)),
\end{equation}
with
\begin{equation}\label{corrected_prediction}
\begin{split}
    \mathbf{\hat{a}}_{-i}^*(t)&=\mathbf{a}_{-i}^+(t)+\boldsymbol{\hat{\omega}}_{-i}(t),\\
    \boldsymbol{\hat{\omega}}_{-i}(t)&=\mathbf{1}_T\otimes\omega_{-i}(t-1),\\
   & =\mathbf{1}_T\otimes\left(a_{-i}^*(t-1)-a_{-i}^+(t-1)\right),
\end{split}
\end{equation}
where $\otimes$ represents Kronecker product, and $\mathbf{1}_T$ is a vector of ones with $T$ elements. {Note that $J_i$ in \eqref{corrector} is the cost function of the ego vehicle and hence can be assumed to be known.}

Equation \eqref{corrected_prediction} finds the corrected prediction on surrounding agents' actions, i.e., $\mathbf{\hat{a}}_{-i}^*(t)=\{\hat{a}_{-i}^*(t),\hat{a}_{-i}^*(t+1),\cdots,\hat{a}_{-i}^*(t+T-1)\}$,  taking into consideration both the prediction from the PG, i.e., $\mathbf{a}_{-i}^+(t)$, and the observed action deviation at $t-1$, i.e., $\boldsymbol{\hat{\omega}}_{-i}(t)=\mathbf{1}_T\otimes\omega_{-i}(t-1)$.  The ego vehicle best response to this corrected prediction is $ \mathbf{a}_i^*(t)$, derived by  \eqref{corrector}. 

At time $t$, let $\tau\in[t,t+T-1]$, $\tau\in\mathbb{Z}_+$. 
Define prediction error at $\tau$ as
\begin{equation}\label{prediction_error}
\begin{split}
    & e(\tau)=a_{-i}^*(\tau)-\hat{a}^*_{-i}(\tau).
\end{split}
\end{equation}
{The prediction error is of dimension $m_{-i}$, i.e., $e(\tau)\in\mathbb{R}^{m_{-i}}$.}
 
 Next theorem shows that with \eqref{corrector} and \eqref{corrected_prediction}, the prediction error $e(\tau)$ {admits a bound}.

\begin{thm}\label{t2}[Bounded prediction error]
{Assume that  $a_{-i}^*:\mathbb{Z}_+\rightarrow \mathbb{R}^{m_{-i}}$ and $a_{-i}^+:\mathbb{Z}_+\rightarrow \mathbb{R}^{m_{-i}}$ are Lipschitz continuous functions with 
\begin{equation}\label{bound1}
    \|a_{-i}^*(\tau)-a_{-i}^*(t-1)\|\leq K_1\cdot  (\tau-t+1)\cdot\Delta t,
\end{equation}
\begin{equation}\label{bound2}
    \|a_{-i}^+(\tau)-a_{-i}^+(t-1)\|\leq K_2\cdot  (\tau-t+1)\cdot\Delta t, 
\end{equation}
where $K_1\in\mathbb{R}_+$ and $K_2\in\mathbb{R}_+$ are two constants, $\Delta t\in\mathbb{R}_{++}$ is the sampling time, and $\tau\in[t,t+T-1]$, $\tau\in\mathbb{Z}_+$.}
Then there exists a constant $C\in\mathbb{R}_+$ such that the inequality,
\begin{equation}\label{error_bound}
    \|e(\tau)\|\leq C\cdot  (\tau-t+1)\cdot\Delta t, 
\end{equation} 
holds $\forall \tau\in[t,t+T-1]$, $\tau\in\mathbb{Z}_+$.
\end{thm}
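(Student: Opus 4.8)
The plan is to derive a closed-form expression for the prediction error $e(\tau)$ by unrolling the definition of the corrected prediction in \eqref{corrected_prediction}, and then to bound its norm using the triangle inequality together with the two Lipschitz-type hypotheses \eqref{bound1}--\eqref{bound2}. The only source of error in \eqref{corrector} is the mismatch between $\mathbf{\hat{a}}_{-i}^*(t)$ and the surrounding agents' true strategy $\mathbf{a}_{-i}^*(t)$, so it suffices to analyze $e(\tau)$ componentwise.

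First I would note that, because $\mathbf{a}^+_{-i}(t)=\{a^+_{-i}(t),\dots,a^+_{-i}(t+T-1)\}$ and $\boldsymbol{\hat{\omega}}_{-i}(t)=\mathbf{1}_T\otimes\omega_{-i}(t-1)$ merely repeats the single vector $\omega_{-i}(t-1)=a^*_{-i}(t-1)-a^+_{-i}(t-1)$ in each of its $T$ blocks, the component of $\mathbf{\hat{a}}_{-i}^*(t)$ indexed by $\tau\in[t,t+T-1]$ is $\hat{a}^*_{-i}(\tau)=a^+_{-i}(\tau)+a^*_{-i}(t-1)-a^+_{-i}(t-1)$. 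Substituting this into \eqref{prediction_error} and regrouping yields
\[
e(\tau)=\big(a^*_{-i}(\tau)-a^*_{-i}(t-1)\big)-\big(a^+_{-i}(\tau)-a^+_{-i}(t-1)\big),
\]
i.e., $e(\tau)$ is exactly the difference between the actual increment and the PG-predicted increment of the surrounding agents' actions over the index window from $t-1$ to $\tau$.

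Finally I would apply the triangle inequality to this identity and then invoke \eqref{bound1} and \eqref{bound2} to bound the two increments by $K_1(\tau-t+1)\Delta t$ and $K_2(\tau-t+1)\Delta t$ respectively; adding the two bounds gives \eqref{error_bound} with the explicit constant $C=K_1+K_2\in\mathbb{R}_+$, valid for all $\tau\in[t,t+T-1]$. There is no substantive technical obstacle here; the main point requiring care is bookkeeping the Kronecker/stacking structure of $\boldsymbol{\hat{\omega}}_{-i}(t)$, so that the constant correction term exactly cancels the $a^+_{-i}(t-1)$ offset and the error collapses to the clean difference-of-increments form above. It is also worth observing that the resulting bound grows at most linearly in the prediction index — at $\tau=t$ it already gives the one-step drift bound $C\,\Delta t$ — which is the monotonicity property the subsequent safety analysis will rely on.
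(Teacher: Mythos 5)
Your proposal is correct and follows essentially the same route as the paper's proof: both unroll \eqref{corrected_prediction} and \eqref{actual_deviation} to obtain the identity $e(\tau)=\left(a_{-i}^*(\tau)-a_{-i}^*(t-1)\right)-\left(a_{-i}^+(\tau)-a_{-i}^+(t-1)\right)$ and then apply the triangle inequality with \eqref{bound1}--\eqref{bound2} to conclude with $C=K_1+K_2$. No gaps.
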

\begin{proof}
Substituting \eqref{actual_deviation} and \eqref{corrected_prediction} into \eqref{prediction_error}, and since $\hat{\omega}_{-i}(\tau)=\omega_{-i}(t-1)$, $\forall\tau\in[t,t+T-1]$, $\tau\in\mathbb{Z}_+$, we have
\begin{equation}\label{error_1}
    \begin{split}
     e(\tau)&=a_{-i}^*(\tau)-\hat{a}^*_{-i}(\tau)\\
    &=\left(a_{-i}^+(\tau)+\omega_{-i}(\tau)\right)-\left(a_{-i}^+(\tau)+\hat{\omega}_{-i}(\tau)\right)\\
   &=\omega_{-i}(\tau)-\hat{\omega}_{-i}(\tau)\\
    &=\omega_{-i}(\tau)-\omega_{-i}(t-1)\\
    &=\left(a_{-i}^*(\tau)-a_{-i}^+(\tau)\right)-\left(a_{-i}^*(t-1)-a_{-i}^+(t-1)\right)\\
   &=\left(a_{-i}^*(\tau)-a_{-i}^*(t-1)\right)-\left(a_{-i}^+(\tau)-a_{-i}^+(t-1)\right).
\end{split}
\end{equation}

{Substituting} \eqref{bound1} and \eqref{bound2} into \eqref{error_1}, we have
\begin{equation}
\begin{split}
    &\|e(\tau)\|\\
    &=\left\|\left(a_{-i}^*(\tau)-a_{-i}^*(t-1)\right)-\left(a_{-i}^+(\tau)-a_{-i}^+(t-1)\right)\right\|\\
    &\leq\left\|a_{-i}^*(\tau)-a_{-i}^*(t-1)\right\|+\left\|a_{-i}^+(\tau)-a_{-i}^+(t-1)\right\|\\
    &\leq (K_1+K_2)\cdot(\tau-t+1)\cdot\Delta t.
    \end{split}
\end{equation}
By letting $C=K_1+K_2$, the proof is completed.
\end{proof}
Theorem \ref{t2} shows that the prediction error $e(\tau)$ remains bounded {over} a finite prediction horizon, and the bound depends on the constants $K_1$ and $K_2$.
The value of $K_1$ can be estimated from acceleration, jerk, and angular acceleration limits of the individual vehicles. For example, if the actions are the longitudinal acceleration/deceleration, then $K_1$ can be estimated from the vehicle jerk limit. According to \cite{jerk_limit}, $99.9\%$ vehicle jerk in the studied highway driving is within the range $[-1.47,1.07]m/s^3$. Therefore, one may select $K_1=1.47$ for highway scenarios (or select $K_1=2.94$ per the six sigma rules \cite{six_sigma} to enhance safety). In addition, since it is not likely that the peak jerk lasts for a long period, one may make $K_1$ smaller when $(\tau-t+1)\cdot\Delta t\geq 1s$, or limit the maximum acceleration to further narrow the error bounds. The value of  $K_2$ solely depends on the PG design, and thus can be manually selected (e.g., by letting $K_2=K_1$ to make it consistent with real trajectory bound) and be enforced when solving the PG. Additionally, it can also be estimated from offline closed-loop simulations of the multi-agent system operating according to the PG in Algorithm \ref{A1}.



Given the bound \eqref{error_bound}, we denote 
\begin{equation}\label{errorset}
    \mathcal{E}(\tau)=\{\hat{e}\in\mathbb{R}^{{m_{-i}}}|\|\hat{e}\|\leq C\cdot  (t+\tau-1)\cdot\Delta t\}
\end{equation} 
as the set of all possible prediction errors at $\tau$, $\tau\in[t,t+T-1]$.

The  Corrector {algorithm} is summarized in Algorithm \ref{A2}.

\begin{algorithm}[t]
\caption{PG Corrector} \label{A2}
\hspace*{0.0in} {\bf Input:} \\ 
\hspace*{0.2in}Output from the Predictor,  $\mathbf{\hat{a}}_{-i}^+(t)$;  \\
\hspace*{0.2in}Surrounding agents' actions at $t-1$, $a^*_{-i}(t-1)$;  \\
\hspace*{0.2in}Ego vehicle strategy space,  $\mathcal{A}_i$;  \\
\hspace*{0.2in}Ego vehicle cost function,  $J_i$;\\
\hspace*{0.02in} {\bf Output:} \\
\hspace*{0.2in} Optimal strategy for the ego vehicle,  $\mathbf{a}_i^*(t)$.\\
\hspace*{0.02in} {\bf Procedures:} 
\begin{algorithmic}[1]
\STATE \hspace*{0.02in}Find the corrected prediction $\mathbf{\hat{a}}_{-i}^*(t)$ according to \eqref{corrected_prediction}.
\STATE \hspace*{0.02in}Find the ego vehicle best response $\mathbf{a}_i^*(t)$ according to \eqref{corrector}.
\end{algorithmic}
\end{algorithm}

\begin{remark}
The ego vehicle cost functions in the Predictor and in the Corrector {do not need to be} identical, although in the current PCPG they are designed to be the same {to simplify the developments}. In the Predictor, the ego vehicle cost function needs to follow the form {required by} Theorem \ref{t1} {for} a PG. However, in the Corrector, it is not necessary, and the cost function can be designed in a more flexible way. 
\end{remark}

\begin{remark}
{Many} of the existing game-theoretic  AV decision-making {approaches} are open-loop \cite{suzhou,openloop1,Victor,my_potential}: Even if the ego vehicle observes other agents' action deviations, it {does not} adjust its own decision-making. A few works, on the other hand, aim to address this issue by learning driver-specific cost functions in real time \cite{closedloop1,dimitar_dynamics_3,online_payoff}. However, considering the short duration of agent interaction, {estimating others' cost functions can be challenging. In contrast, our approach relies on correcting the actions based on other agents' deviations and on exploiting the error bound \eqref{error_bound} predetermined offline.}
\end{remark}

\section{Safety and performance analysis}\label{performance}
This section analyzes the PCPG performance, including safety and optimality.

We define {a} safe set $\mathcal{X}_{i}^{\text{safe}}(x_{-i}(t))$ as the set of {all} $x_i(t)$ such that given $x_{-i}(t)$, if the ego vehicle state is within the safe set, i.e., $x_i(t)\in\mathcal{X}_{i}^{\text{safe}}(x_{-i}(t))$, then the ego vehicle is considered safe at $t$. 
An example of such a safe set, if $x$ represents vehicle  position, is $\mathcal{X}_{i}^{\text{safe}}(x_{-i}(t))=\{x_i(t)|\|x_i(t)-x_j(t)\|\geq d_{\text{safe}},\forall j\in\mathcal{N}_{-i}\}$, where $d_{\text{safe}}>0$ is a predefined safe distance. 

At time $t$, we denote the surrounding {agents' future state trajectories} generated by their actual action sequence $\mathbf{a}^*_{-i}(t)=\{a^*_{-i}(t),\cdots,a^*_{-i}(t+T-1)\}$ as $\{x^*_{-i}(t+1),\cdots,x^*_{-i}(t+T)\}$, unknown to the ego vehicle. For $\tau\in [t+1,t+T]$, denote by  $\hat{\mathcal{X}}_{-i}(\tau|t)$  the set of $\hat{x}_{-i}(\tau|t)$ generated by the set of action sequences $\{\hat{a}^*_{-i}(t)+\hat{e}(t),\cdots,\hat{a}^*_{-i}(\tau-1)+\hat{e}(\tau-1)\}$, where $\hat{e}(k)\in\mathcal{E}(k), k\in[t,\tau-1]$, and $\mathcal{E}(k)$ is defined in \eqref{errorset}. With a slight abuse of notation, $\hat{\mathcal{X}}_{-i}(\tau|t)$ can be represented as 
\begin{equation}\label{possibleactions}
\begin{split}
&\hat{\mathcal{X}}_{-i}(\tau|t)=\{\hat{x}_{-i}(\tau|t)|\hat{x}_{-i}(\tau|t)= \\
& f_{-i}\left(x_{-i}(t),\{\hat{a}^*_{-i}(t)+\hat{e}(t),\cdots,\hat{a}^*_{-i}(\tau-1)+\hat{e}(\tau-1)\}\right)\\
&|\hat{e}(k)\in\mathcal{E}(k), k\in[t,\tau-1]\},
\end{split}
\end{equation}
where 

$f_{-i}\left(x_{-i}(t),\{\hat{a}^*_{-i}(t)+\hat{e}(t),\cdots,\hat{a}^*_{-i}(\tau-1)+\hat{e}(\tau-1)\}\right)$ represents the surrounding agents' state at $\tau$ if the action sequence $\{\hat{a}^*_{-i}(t)+\hat{e}(t),\cdots,\hat{a}^*_{-i}(\tau-1)+\hat{e}(\tau-1)\}$ is implemented. 

Let $\mathcal{A}_{i}^{safe}(t,T)\subseteq\mathcal{A}_i$ denote the set of $\mathbf{a}_i(t)$ such that $\forall\tau\in [t+1,t+T]$,
\begin{equation}\label{25}
\begin{split}
    x_i(\tau)&=f_i\left(x_i(t),\{a_i(t),\cdots,a_i(\tau-1)\}\right)\\
    &\in\mathcal{X}_{i}^{safe}(\hat{x}_{-i}(\tau|t))
\end{split}
\end{equation}
holds $\forall \hat{x}_{-i}(\tau|t)\in\hat{\mathcal{X}}_{-i}(\tau|t)$.

Our next theorem shows that the outcome from the PCPG guarantees the ego vehicle safety, under suitable  conditions.

\begin{thm}\label{t3}[Safety]
If the ego vehicle cost function $J_i$ is designed such that
\begin{equation}\label{26}
    \argmin_{\mathbf{a}_i(t)\in\mathcal{A}_i} J_i(\mathbf{a}_i(t),\mathbf{\hat{a}}_{-i}^*(t))\subseteq \mathcal{A}_{i}^{safe}(t,T),
\end{equation}
and
\begin{equation}\label{27}
    \mathcal{A}_{i}^{safe}(t,T)\neq\emptyset,
\end{equation}
then $\mathbf{a}_i^*(t)$ guarantees the ego vehicle safety within the horizon $[t+1,t+T]$, i.e., $x_i(\tau)\in\mathcal{X}_{i}^{safe}(x^*_{-i}(\tau)), \forall\tau\in [t+1,t+T]$.
\end{thm}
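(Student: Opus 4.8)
The plan is to chase the definitions in three stages: first argue that the Corrector output $\mathbf{a}_i^*(t)$ is a \emph{safe plan} in the sense encoded by $\mathcal{A}_{i}^{safe}(t,T)$; then show that the \emph{actual} future trajectory of the surrounding agents lies inside the predicted uncertainty tube $\hat{\mathcal{X}}_{-i}(\tau|t)$; and finally instantiate the safe-set membership in \eqref{25} at that actual trajectory. No new machinery beyond Theorem \ref{t2} should be needed — the work is entirely in lining up the set definitions.

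For the first stage I would note that $\mathbf{a}_i^*(t)$ is, by its construction \eqref{corrector}, a minimizer of $J_i(\cdot,\mathbf{\hat a}_{-i}^*(t))$ over $\mathcal{A}_i$. Since $\mathcal{A}_i$ is nonempty and compact (Assumption \ref{As1}) and $J_i$ is continuous (it is differentiable, as $\hat J_i=J_i$ under Assumption \ref{A2}), this minimizer exists, so hypothesis \eqref{26} immediately yields $\mathbf{a}_i^*(t)\in\mathcal{A}_{i}^{safe}(t,T)$; hypothesis \eqref{27} is what makes this conclusion nonvacuous, i.e. guarantees a safe plan exists in the first place. Unwinding the definition \eqref{25} of $\mathcal{A}_{i}^{safe}(t,T)$, it then follows that for every $\tau\in[t+1,t+T]$ the ego state $x_i(\tau)=f_i\bigl(x_i(t),\{a_i^*(t),\dots,a_i^*(\tau-1)\}\bigr)$ satisfies $x_i(\tau)\in\mathcal{X}_{i}^{safe}(\hat x_{-i}(\tau|t))$ for \emph{every} $\hat x_{-i}(\tau|t)\in\hat{\mathcal{X}}_{-i}(\tau|t)$.

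The second and main stage is to verify that the true surrounding-agent state $x_{-i}^*(\tau)$ is one of those admissible $\hat x_{-i}(\tau|t)$. Using the definition \eqref{prediction_error} of the prediction error I would write $a_{-i}^*(k)=\hat a_{-i}^*(k)+e(k)$ for each $k$; Theorem \ref{t2}, together with \eqref{errorset}, gives $e(k)\in\mathcal{E}(k)$ for all $k\in[t,t+T-1]$, hence in particular for all $k\in[t,\tau-1]$ since $\tau\le t+T$. Because the surrounding agents start from $x_{-i}(t)$ and apply their actual sequence $\mathbf{a}_{-i}^*(t)$, their true state at $\tau$ is $x_{-i}^*(\tau)=f_{-i}\bigl(x_{-i}(t),\{\hat a_{-i}^*(t)+e(t),\dots,\hat a_{-i}^*(\tau-1)+e(\tau-1)\}\bigr)$, which is exactly the generic element of \eqref{possibleactions} associated with the admissible error sequence $\{e(t),\dots,e(\tau-1)\}$. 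Therefore $x_{-i}^*(\tau)\in\hat{\mathcal{X}}_{-i}(\tau|t)$.

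Combining the two stages, I would take $\hat x_{-i}(\tau|t)=x_{-i}^*(\tau)$ in the conclusion of the first stage to get $x_i(\tau)\in\mathcal{X}_{i}^{safe}(x_{-i}^*(\tau))$ for all $\tau\in[t+1,t+T]$, which is the assertion. The step I expect to require the most care is precisely the second stage: $\hat{\mathcal{X}}_{-i}(\tau|t)$ is built from error \emph{sequences} $\{\hat e(t),\dots,\hat e(\tau-1)\}$ with each $\hat e(k)$ ranging \emph{independently} over $\mathcal{E}(k)$, so one must check that the realized error is admissible at \emph{every} intermediate step $k\le\tau-1$, not merely at $\tau$ — which is why the uniform-over-horizon bound of Theorem \ref{t2} is essential — and one must keep the index ranges $k\in[t,\tau-1]$, $\tau\in[t+1,t+T]$ consistent throughout. (A minor caveat: \eqref{errorset} appears to contain a typo and should read $C\cdot(\tau-t+1)\cdot\Delta t$, matching \eqref{error_bound}; the argument only uses the bound furnished by Theorem \ref{t2}.)
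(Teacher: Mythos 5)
Your proposal is correct and follows essentially the same route as the paper's proof: apply Theorem \ref{t2} to place each actual action $a_{-i}^*(k)$ in the tube $\{\hat a_{-i}^*(k)+\hat e(k)\mid \hat e(k)\in\mathcal{E}(k)\}$, conclude $x_{-i}^*(\tau)\in\hat{\mathcal{X}}_{-i}(\tau|t)$ via \eqref{possibleactions}, and combine with \eqref{25}--\eqref{27} to obtain $x_i(\tau)\in\mathcal{X}_{i}^{safe}(x^*_{-i}(\tau))$. Your added care about the error being admissible at every intermediate step $k\in[t,\tau-1]$, and your note on the typo in \eqref{errorset}, are both apt but do not change the argument.
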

\begin{proof}
According to Theorem \ref{t2},  $\forall\tau\in [t+1,t+T]$,
\begin{equation}
    a_{-i}^*(\tau-1)\in\{\hat{a}_{-i}^*(\tau-1)+\hat{e}(\tau-1)\text{ }|\text{ }\hat{e}(\tau-1)\in\mathcal{E}(\tau-1)\}.
\end{equation}
Therefore, according to \eqref{possibleactions}, 
\begin{equation}\label{29}
    x^*_{-i}(\tau)\in\hat{\mathcal{X}}_{-i}(\tau|t).
\end{equation}
Combining \eqref{25}, \eqref{26}, \eqref{27} and \eqref{29}, we have
\begin{equation}
    x_i(\tau)
    \in\mathcal{X}_{i}^{safe}(x^*_{-i}(\tau)).
\end{equation}
Therefore, the ego vehicle safety is maintained within the horizon $[t+1,t+T]$.
\end{proof}

\begin{remark}
Theorem \ref{t3} states that the ego vehicle safety is guaranteed with the PCPG, if the following two conditions are satisfied: 1) The ego vehicle is safety-conscious, i.e., \eqref{26} holds, and 2) A safe strategy exists, i.e., \eqref{27} holds.

A strategy $\mathbf{a}_i(t)$ is said to be safe if it satisfies \eqref{25}. Intuitively, it means that if a strategy leads to the ego vehicle safety against a set of surrounding agents' strategies, given by the prediction from the PCPG, $\hat{\mathbf{a}}_{-i}^*(t)$, and the bounded prediction error, $e(\tau)\in\mathcal{E}(\tau)$, then it is considered to be safe.

The condition in \eqref{26} represents a safety-conscious ego vehicle. That is, if safe strategies exist,  then a global minimizer of $J_i$ should be one of them. Such a cost function can be designed by incorporating the safety constraint as a barrier in $J_i$, using, for example, the interior-point method \cite{IPM1,IPM2}. A detailed example of how to design such a cost function is provided in Section \ref{simulation}. 
\end{remark}

The safety guarantee in Theorem \ref{t3} does not assume any specific behaviors of the surrounding agents. The surrounding agents can behave freely, be non-cooperative, or even be safety-agnostic. In the worst case, where all other agents aim to crash into the ego vehicle, the ego vehicle safety may not be maintained because a safe strategy does not exist. However, as long as a safe strategy exists, i.e., $\mathcal{A}_{i}^{safe}(t,T)\neq\emptyset$, the PCPG leads to the ego vehicle safety. Here $\mathcal{A}_{i}^{safe}(t,T)$ is defined based on the forward reachable set $\mathcal{X}_{i}^{safe}(\hat{x}_{-i}(\tau|t))$ in \eqref{25}, the computation of which can be realized using, for example, reachability-based trajectory design (RTD) \cite{RTD}.

In a situation where a safe strategy does not exist, i.e., $\mathcal{A}_{i}^{safe}(t,T)=\emptyset$, the outcome from the PCPG provides a minimum loss strategy for the ego vehicle, in the sense that $J_i$ is minimized subject to $\mathbf{\hat{a}}^*_{-i}(t)$. Note that $\mathcal{A}_{i}^{safe}(t,T)=\emptyset$ does not mean that a collision would definitely happen, as whether a collision happens depends on other agents' actions as well. If other agents are also safety-conscious, then the PCPG enables the ego vehicle to collaborate with others to avoid collision, because the outcome from the PG is not only individually optimal, but also  optimizes the ``team interest" of all agents, none of which desires a collision.

\begin{remark}
Theorem \ref{t3} does not guarantee recursive feasibility, unless other agents also satisfy certain safety constraints. That is, $\mathcal{A}_{i}^{safe}(t+1,T)$ is not guaranteed to be nonempty because we cannot control the surrounding agents' behaviors. As shown in \cite{RSS}, when AVs and human-driven vehicles share the road, it is impossible to guarantee absolute safety for an AV. On the other hand, if other agents' behaviors do satisfy certain safety constraints, e.g., the robust CBF (RCBF) constraints (i.e., Equation (2) in \cite{pcca}), then Theorem \ref{t3} can guarantee recursive feasibility, by letting $J_i$ incorporate the RCBF constraints as barriers. However, as we shall see in Section \ref{simulation}, if other agents are not safety-conscious, the CBF based approach can be less preferable than the PCPG. 
\end{remark}

Our next theorem studies the optimality of $\mathbf{a}_i^*(t)$. According to \eqref{corrector},  $\mathbf{a}_i^*(t)$ is optimal  if $\mathbf{\hat{a}}_{-i}^*(t)=\mathbf{a}_{-i}^*(t)$. With a slight abuse of notation, we denote $\left(\mathbf{a}_i^*(t),\mathbf{a}_{-i}^*(t)\right)$ (resp., $\left(\mathbf{a}_i^*(t),\mathbf{\hat{a}}_{-i}^*(t)\right)$) as the strategy profile that the surrounding agents take $\mathbf{a}_{-i}^*(t)$ (resp., $\mathbf{\hat{a}}_{-i}^*(t)$) and the ego vehicle take $\mathbf{a}_i^*(t)\in\argmin_{\mathbf{a}_i(t)\in\mathcal{A}_i} J_i(\mathbf{a}_i(t),\mathbf{a}_{-i}^*(t))$ (resp., $\mathbf{a}_i^*(t)\in\argmin_{\mathbf{a}_i(t)\in\mathcal{A}_i} J_i(\mathbf{a}_i(t),\mathbf{\hat{a}}_{-i}^*(t))$). 

\begin{thm}\label{t4}[Optimality]
Consider the PCPG designed in Section \ref{PCPG} and the action deviation in \eqref{actual_deviation}. If $\omega_{-i}(t)$ varies slowly with time, i.e., $\omega_{-i}(t)-\omega_{-i}(t-1)\rightarrow \mathbf{0}$, $\forall t\in \mathbb{Z}_+$, where $\mathbf{0}$ is a vector of zeros of proper dimensions, then $\left(\mathbf{a}_i^*(t),\mathbf{\hat{a}}_{-i}^*(t)\right)\rightarrow\left(\mathbf{a}_i^*(t),\mathbf{a}_{-i}^*(t)\right)$. 
\end{thm}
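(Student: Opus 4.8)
The plan is to split the claim into two parts. First I would show that the corrected prediction converges to the surrounding agents' actual strategy, i.e., $\mathbf{\hat{a}}_{-i}^*(t)\rightarrow\mathbf{a}_{-i}^*(t)$ as $t\rightarrow\infty$; then, using continuity of the ego vehicle's best-response map, I would conclude that the associated best responses converge as well, which yields $\left(\mathbf{a}_i^*(t),\mathbf{\hat{a}}_{-i}^*(t)\right)\rightarrow\left(\mathbf{a}_i^*(t),\mathbf{a}_{-i}^*(t)\right)$.

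For the first part, I would reuse the identity already established inside the proof of Theorem \ref{t2}: for every $\tau\in[t,t+T-1]$,
\[
    e(\tau)=a_{-i}^*(\tau)-\hat{a}_{-i}^*(\tau)=\omega_{-i}(\tau)-\omega_{-i}(t-1).
\]
Since $\mathbf{a}_{-i}^*(t)-\mathbf{\hat{a}}_{-i}^*(t)$ is exactly the stacked vector $\left(e(t)^\top,\dots,e(t+T-1)^\top\right)^\top$, it suffices to show $e(\tau)\rightarrow\mathbf{0}$ for each of these finitely many $\tau$. Writing $\omega_{-i}(\tau)-\omega_{-i}(t-1)=\sum_{k=t}^{\tau}\left(\omega_{-i}(k)-\omega_{-i}(k-1)\right)$, a telescoping sum with at most $T$ terms, and invoking the hypothesis $\omega_{-i}(t)-\omega_{-i}(t-1)\rightarrow\mathbf{0}$ (equivalently, $\omega_{-i}(t+j)-\omega_{-i}(t+j-1)\rightarrow\mathbf{0}$ as $t\rightarrow\infty$ for each fixed $j\in\{0,\dots,T-1\}$), each summand tends to $\mathbf{0}$; since $T$ is fixed, $e(\tau)\rightarrow\mathbf{0}$, so $\mathbf{\hat{a}}_{-i}^*(t)\rightarrow\mathbf{a}_{-i}^*(t)$.

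For the second part, I would invoke continuity of the parametric minimization $\mathbf{a}_{-i}\mapsto\argmin_{\mathbf{a}_i\in\mathcal{A}_i}J_i(\mathbf{a}_i,\mathbf{a}_{-i})$. Under Assumption \ref{As1} ($\mathcal{A}$ compact, hence $\mathcal{A}_i$ nonempty and compact) together with continuity of $J_i$ on $\mathcal{A}$, Berge's maximum theorem implies this best-response correspondence is upper hemicontinuous, and single-valued and continuous if the ego minimizer is unique (e.g., $J_i$ strictly convex in its first argument). Applying this with the convergent parameter sequence $\mathbf{\hat{a}}_{-i}^*(t)\rightarrow\mathbf{a}_{-i}^*(t)$ shows that the Corrector output in \eqref{corrector}, which is the best response to $\mathbf{\hat{a}}_{-i}^*(t)$, converges to the best response to $\mathbf{a}_{-i}^*(t)$; combined with the first part this gives the stated convergence of the two strategy profiles.

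I expect the second part to be the main obstacle: the first part is essentially bookkeeping on top of Theorem \ref{t2}, whereas continuity of the best response is not automatic and hinges on the ego minimizer being well behaved. If uniqueness of the ego vehicle minimizer is not assumed, the cleanest fix is to phrase the conclusion via Hausdorff convergence of the $\argmin$ sets, or to argue with upper hemicontinuity alone that every convergent subsequence of Corrector outputs has its limit in $\argmin_{\mathbf{a}_i\in\mathcal{A}_i}J_i(\mathbf{a}_i,\mathbf{a}_{-i}^*(t))$. I would flag the needed regularity (continuity of $J_i$ on the compact set $\mathcal{A}$ and, ideally, uniqueness of the ego minimizer) as the hypothesis that makes the limit statement rigorous.
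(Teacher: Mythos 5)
Your first step coincides with the paper's own proof: the paper likewise starts from the identity $e(\tau)=a_{-i}^*(\tau)-\hat{a}^*_{-i}(\tau)=\omega_{-i}(\tau)-\omega_{-i}(t-1)$ established inside the proof of Theorem \ref{t2} and concludes $e(\tau)\rightarrow\mathbf{0}$ for each of the finitely many $\tau\in[t,t+T-1]$, hence $\mathbf{\hat{a}}_{-i}^*(t)\rightarrow\mathbf{a}_{-i}^*(t)$; your telescoping sum merely makes the ``finitely many terms'' step explicit. Where you genuinely depart from the paper is the second step. The paper's proof stops after the first step and treats $\left(\mathbf{a}_i^*(t),\mathbf{\hat{a}}_{-i}^*(t)\right)\rightarrow\left(\mathbf{a}_i^*(t),\mathbf{a}_{-i}^*(t)\right)$ as immediate, even though, by the paper's own definition of these two profiles, the ego component of the first is an element of $\argmin_{\mathbf{a}_i(t)\in\mathcal{A}_i} J_i(\mathbf{a}_i(t),\mathbf{\hat{a}}_{-i}^*(t))$ while the ego component of the second is an element of $\argmin_{\mathbf{a}_i(t)\in\mathcal{A}_i} J_i(\mathbf{a}_i(t),\mathbf{a}_{-i}^*(t))$ --- two different objects denoted by the same symbol. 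You are right that closing this gap requires (upper hemi)continuity of the parametric $\argmin$ map, and your appeal to Berge's maximum theorem under Assumption \ref{As1} and continuity of $J_i$, with uniqueness of the ego minimizer (e.g., strict convexity in $\mathbf{a}_i$) to get single-valued continuity, is the standard and correct way to do it. Your version is therefore more complete than the published one: it supplies the regularity that the paper's one-line conclusion implicitly assumes. The only caveat is that these hypotheses do not appear in the statement of Theorem \ref{t4}, so you are in effect proving a slightly strengthened theorem; flagging that explicitly, as you do, is the right call.
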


\begin{proof}
If $\omega_{-i}(t)-\omega_{-i}(t-1)\rightarrow \mathbf{0}$ and $T$ is finite, then according to \eqref{error_1},  
\begin{equation}\label{T4_eq}
\begin{split}
    &a_{-i}^*(\tau)-\hat{a}^*_{-i}(\tau)\\
    &\quad=\omega_{-i}(\tau)-\omega_{-i}(t-1)\\
    &\quad\rightarrow \mathbf{0}, \quad \forall \tau\in[t,t+T-1], \forall t\in \mathbb{Z}_+.
    \end{split}
\end{equation}
Since from \eqref{T4_eq}, $\hat{a}_{-i}^*(\tau)\rightarrow a^*_{-i}(\tau)$ holds for all $\tau\in[t,t+T-1]$ and $\forall t\in \mathbb{Z}_+$, we have $\mathbf{\hat{a}}_{-i}^*(t)\rightarrow\mathbf{a}_{-i}^*(t)$. Therefore,   $\left(\mathbf{a}_i^*(t),\mathbf{\hat{a}}_{-i}^*(t)\right)\rightarrow\left(\mathbf{a}_i^*(t),\mathbf{a}_{-i}^*(t)\right)$.
\end{proof}
\begin{remark}
Theorem \ref{t4} suggests that if the action deviation $\omega(t)$ varies slowly with time, then the outcome from the PCPG accurately approximates the actual PSNE $\left(\mathbf{a}_i^*(t),\mathbf{a}_{-i}^*(t)\right)$. 
In practice, a slowly time-varying  $\omega(t)$ indicates a consistent driving style. 
For example, if a driver behaves aggressively at $t-1$, which may be reflected by a larger-than-typical acceleration, i.e., $a_j^*(t-1)-a_j^+(t-1)>0$, it is reasonable to assume that he/she would continue this aggressiveness for a while, i.e.,  $a_j^*(t+k)-a_j^+(t+k)>0$ for some $k=0,1,\cdots$. 

From a reasoning perspective, such a  slowly time-varying  $\omega(t)$ indicates a consistent reasoning behind agent behaviors. For example, if agent $j$ cares about safety less than a typical driver (i.e., the weight of the collision avoidance term in $J_j$ is smaller than in $\hat{J}_j$), which may cause a more moderate braking than expected, i.e., $0>a_j^*(t-1)>a_j^+(t-1)$, then it is reasonable to assume that this inequality also holds at $t$, i.e., $0>a_j^*(t)>a_j^+(t)$, since his/her reasoning, characterized by $J_j$,  remains the same at time $t$. 


\end{remark}

Note that although a slowly time-varying  $\omega(t)$ facilitates achieving near optimality, it is not required or necessary in the safety guarantee, i.e., Theorem \ref{t3}.

To summarize, with the PCPG, the ego vehicle plans its strategy according to $\hat{J}_j$ (Predictor), and gets the strategy improved by incorporating others' action deviations into its decision-making (Corrector). Intuitively, although others' actual cost functions are unknown, the ego vehicle can infer how different they are from the assumed ones, e.g., whether the aggressiveness is underestimated or overestimated, by observing  $a_j^*(t-1)-a_j^+(t-1)$. With this information, the prediction error on others' actions can be effectively compensated without requiring accurate driver-specific cost functions.

 \begin{remark}
Note that the agent set $\mathcal{N}$ can be a subset of the traffic agents, depending on the application needs and the road structure. For example, in the 3-lane highway scenario studied in \cite{Ford_1}, up to 6 surrounding vehicles are considered as interactive agents. In the merging scenario studied in \cite{kaiwen}, 3 surrounding vehicles are selected according to a headway-based rule. These agent selection rules are also applicable in the proposed PCPG setting.  The performance of the PCPG with a subset of traffic agents is reported in Section \ref{simulation_validation}.
\end{remark}

\section{Numerical studies}\label{simulation}

This section applies  the  PCPG framework to specific traffic scenarios. To illustrate the scalability of the PCPG to varied environments/traffic scenarios, we test the PCPG performance in three different scenarios: two-vehicle oncoming traffic scenario, five-vehicle intersection-crossing scenario, and multi-vehicle highway merging scenario. In addition, validation results using a naturalistic driving dataset are also reported in highway merging scenario.

The vehicles' dynamics are described by the following equations \cite{dynamics}.
\begin{equation}
\begin{split}
    &X_i(t+1)=X_i(t)+  v_{i}(t)\Delta t ,\\
    &v_{i}(t+1)=v_{i}(t)+ a_{i}(t)\Delta t,
    \end{split}
\end{equation}
where $i=1,2,\cdots,N$ are the traffic agents, $X_i=[x_i,y_i]^T$ represents agent $i$'s position, $v_i=[v_{x,i},v_{y,i}]^T$ is the speed, and  $a_i=[a_{x,i},a_{y,i}]^T$ is the  acceleration, and here, agent $i$'s action. Let the sampling time $\Delta t$ be $0.5s$ in the simulation.

Agents' cost functions  are designed as
\begin{equation}\label{cost_predictor}
\begin{split}
    J_i(\mathbf{a}_i(t),\mathbf{a}_{-i}(t))=\theta_iJ_i^{self}(\mathbf{a}_i(t))+\sum_{j\in \mathcal{N},j\neq i}J_{ij}(\mathbf{a}_i(t),\mathbf{a}_j(t)),
\end{split}
\end{equation}
where $\mathbf{a}_i(t)=\{a_i(t),a_i(t+1),\dots,a_i(t+T-1)\}$ is agent $i$'s strategy generated at $t$ over the prediction horizon of length $T$, and $\theta_i$ is a constant that characterizes agent $i$'s aggressiveness. 

The first term in \eqref{cost_predictor} is designed to track the desired position and speed. Specifically,
\begin{equation}\label{cost_self}
\begin{split}
  &J_i^{self}(\mathbf{a}_i(t))\\
  &= \sum_{\tau=t}^{t+T-1} \left(\Delta X_i(\tau)^TQ_i\Delta X_i(\tau)+\Delta v_i(\tau)^TR_i\Delta v_i(\tau)\right),
\end{split}
\end{equation}
where $\Delta X_i(\tau)=X_i(\tau)-X_{i}^d$, $\Delta v_i(\tau)=v_i(\tau)-v_{i}^d$, $X_{i}^d$ and $v_{i}^d$ are the desired position and speed, respectively. The desired position usually represents, for example, staying in the center of a lane. Here $Q_i$ and $R_i$ are weighting matrices,  $Q_i$ is positive semi-definite, and $R_i$ is positive definite. 

The second term in \eqref{cost_predictor} is to avoid collision and is designed to be
\begin{equation}\label{cost_collision}
 J_{ij}(\mathbf{a}_i(t),\mathbf{a}_j(t))=\sum_{\tau=t}^{t+T-1} \frac{d_d^2}{d_{ij}^2(\tau)+\delta},   
\end{equation}
 where $d_d$ is the  comfortable (or desired)  inter-vehicle distance, $\delta> 0$ is a positive small number to avoid the denominator being $0$. The cost \eqref{cost_collision} discourages the actions that lead to small inter-vehicle distance $d_{ij}(\tau)$:
 \begin{equation}\label{distance}
     d_{ij}(\tau)=\sqrt{\left(x_i(\tau)-x_j(\tau)\right)^2+\left(y_i(\tau)-y_j(\tau)\right)^2}.
 \end{equation}
As shown from \eqref{cost_predictor}-\eqref{distance}, agent $i$'s cost is affected by not only its own states and actions but also the states and actions of its surrounding agents. In light traffic, where $d_{ij}$ is sufficiently large for all $j$, agent $i$ may be able to track its desired speed $v_i^d$, since its behavior is primarily governed by \eqref{cost_self} in this case. However, in dense traffic, agent $i$ may not be able to maintain its desired speed, since the safety-related cost \eqref{cost_collision}-\eqref{distance} would increase dramatically with the decrease of inter-vehicle distances.

In our simulation, the strategy space $\mathcal{A}_i$ is selected such that a constant action is planned over the prediction horizon, i.e.,  $a_i(\tau)=a_i(t)$, $\forall \tau\in[t,t+T-1]$. Note that although the AV {plans} one maneuver over the horizon, it may change its mind and select another maneuver after $\Delta t$, triggered by the receding horizon control.  This setting is consistent with a common driving experience: When a human driver plans a maneuver, e.g., steering to change lane,  he/she usually expects to continue this maneuver for some time, e.g., $4s$ for lane-changing \cite{add_T}.

\subsection{Oncoming traffic scenario}\label{oncoming}
This subsection considers the oncoming traffic scenario, where two vehicles encounter each other on a narrow road, as shown in Figure \ref{f_oncoming}. In such a scenario, both vehicles desire to keep their lateral positions, maintain the lateral and longitudinal velocities, and avoid collisions. To safely pass each other, vehicles have to compromise and deviate their trajectories and velocities from the individually desired ones when they cross each other. How much a vehicle would compromise depends on its aggressiveness, i.e., $\theta_i$. This scenario is challenging to handle even for human drivers, as it requires careful interaction with the other driver.  Wrong prediction on the other's intention/strategy can easily lead to a collision. We consider limited control authority for both vehicles, i.e., $a_{x,i}(t)\in[-3,3]$ $m/s^2$ and $a_{y,i}(t)\in[-3,3]$ $m/s^2$, $i=1,2$. 
\begin{figure}[thpb]
\centering
\includegraphics[width=0.18\textwidth]{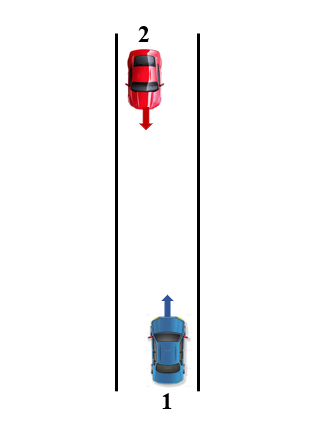}
\caption{Oncoming traffic scenario: two vehicles encounter each other on a narrow road.}\label{f_oncoming}
\end{figure}
\textbf{Study 1: PG alone.} We first test the performance of the Predictor alone, i.e., the potential game with predefined cost functions. The game is solved using the potential function optimization, i.e., Algorithm \ref{A1}. 

We first let $\hat{J}_2=J_2$, that is, vehicle 2's cost function is known to the ego vehicle. In this situation, the two vehicles successfully bypass each other without collision, as shown in Figure \ref{PG_smart}. (The animation is available in Youtube at \textit{https://www.youtube.com/watch?v=UOefR0Dhhqk}.) The dotted circles represent the safe distance, and have the radius of $2m$ for each vehicle. If the circles cross, a collision is considered to happen. 

We then consider the case $\hat{J}_2\neq J_2$. Specifically, the ego vehicle assumes $\theta_2=1$  (corresponding to $\hat{J}_2$), however, $\theta_2=10$ (corresponding to $J_2)$, representing a more aggressive vehicle 2 than the ego vehicle expects. In this situation, the ego vehicle expects cautious behavior from vehicle 2, i.e., deviating its desired trajectory and speed sufficiently to maintain safety. However, vehicle $2$ does not compromise so much, leading to collision at $t=6.2s$, as shown in Figure \ref{PG_stupid}.


\begin{figure}
    \centering
    \includegraphics[width=0.3\textwidth]{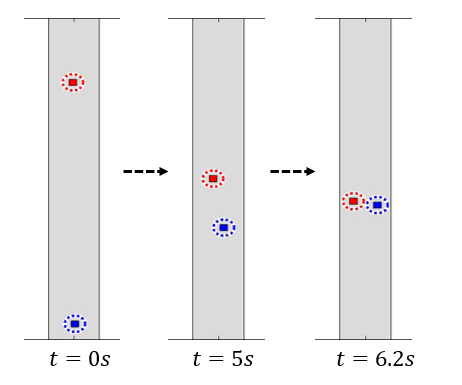}
    \caption{Oncoming traffic scenario with PG alone. $\hat{J}_2=J_2$. Two vehicles collaboratively bypass each other. (Video is available at \textit{https://www.youtube.com/watch?v=UOefR0Dhhqk}.) }
    \label{PG_smart}
\end{figure}

\begin{figure}
    \centering
    \includegraphics[width=0.3\textwidth]{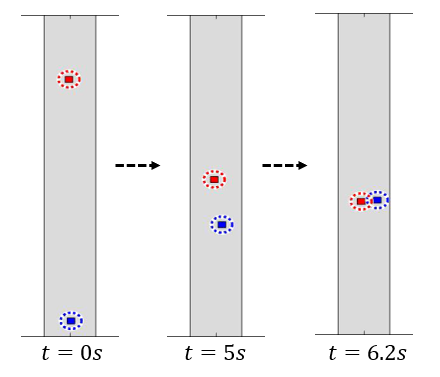}
    \caption{Oncoming traffic scenario with PG alone.  $\hat{J}_2\neq J_2$. A collision happens at $t=6.2s$.}
    \label{PG_stupid}
\end{figure}

\textbf{Study 2: PCPG.} With the same setting and initial conditions as in Figure \ref{PG_stupid}, we then test the performance of the PCPG. The result is shown in Figure \ref{oncoming_PCPG}. Thanks to the Corrector, the ego vehicle notices that vehicle $2$ does not behave as expected, and therefore, it adjusts its decision-making by feeding back vehicle 2's actual behaviors according to Algorithm \ref{A2}. The collision is now successfully avoided, despite the mis-information $\hat{J}_2\neq J_2$.

\begin{figure}
    \centering
    \includegraphics[width=0.30\textwidth]{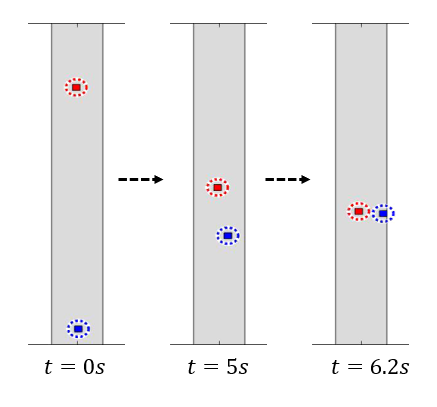}
    \caption{Oncoming traffic scenario with the PCPG. Collision is successfully avoided despite $\hat{J}_2\neq J_2$. }
    \label{oncoming_PCPG}
\end{figure}

To better visualize the two vehicles' behaviors and the ego vehicle's expectations, we plot their trajectories in Figure \ref{oncoming_trajectory}. 
The blue and red circles represent the trajectories of vehicles 1 and 2, i.e., $X_1(t)$ and $X_2(t)$, respectively. The ego vehicle trajectory is from the PCPG. The yellow lines represent the ego vehicle predicted vehicle 2 trajectory in the PG,  at each $t$ with a horizon $T=2s$. The green lines represent the corrected prediction in the PCPG. It is clear that the prediction error is significantly reduced in the PCPG compared to  the PG, as the green lines, compared to the yellow ones, are much closer to the red. With the corrected prediction error, the ego vehicle can better respond to the approaching of vehicle 2. 

\begin{figure}
    \centering
    \includegraphics[width=0.4\textwidth]{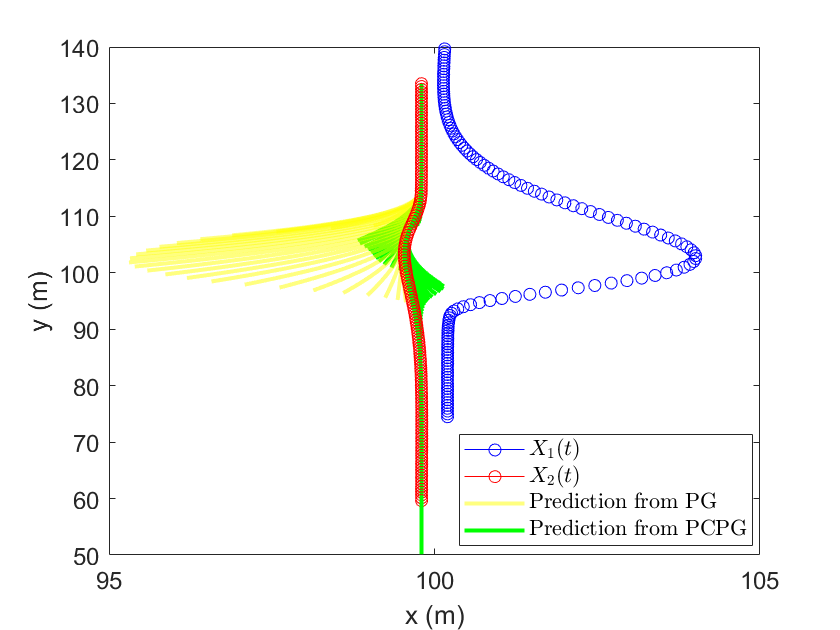}
    \caption{Trajectories of vehicles 1 and 2. Blue circles: $X_1(t)$, Red circles: $X_2(t)$, Yellow lines: predicted $X_2(\tau)$ according to $\mathbf{a}_2^+$ (PG), and Green lines: predicted $X_2(\tau)$ according to $\mathbf{\hat{a}}_2^*$ (PCPG), where $\tau\in[t,t+2](s)$. }
    \label{oncoming_trajectory}
\end{figure}

\textbf{Study 3: PCPG vs. CBF.} In addition, we also compare the PCPG with a CBF based quadratic program (QP) approach. As  vehicles are controlled in a decentralized manner,  the centralized CBF approach of \cite{cbf} is not applicable. As such, we employ a variant of the centralized CBF -- the predictor corrector collision avoidance (PCCA) of  \cite{pcca}. In the PCCA, the ego vehicle solves the following constrained optimization at each $t$.
\begin{equation}\label{PCCA}
\begin{split}
    &\min_{a_1(t),a_2(t)}\left(\|a_1(t)-a_{10}(t)\|^2+\|a_2(t)\|^2\right)\\
    s.t.& \quad b(t)+c(t)a_1(t)-c(t)a_2(t)-c(t)\Tilde{\omega}_2(t)\geq 0
\end{split}
\end{equation}
where $\Tilde{\omega}_2(t)=a_2^*(t-1)-\Tilde{a}_2(t-1)$, $\Tilde{a}_2(t-1)$ is vehicle 2 action from the PCCA at $t-1$, and $a_{10}(t)$ is a baseline controller and is from  a linear quadratic regulator (LQR). The constraint in \eqref{PCCA} is from the robust CBF condition, with a relative-degree-two CBF. Specifically,
\begin{small}
\begin{equation}
\begin{split}
    &b(t)\\
    &=2v_{12}^T(t)v_{12}(t)+2l_1X_{12}(t)^Tv_{12}(t)+l_0(X_{12}^T(t)X_{12}(t)-d_{safe}^2),\\
   &c(t)=2X_{12}^T(t),
    \end{split}
\end{equation}
\end{small}
where $v_{12}(t)=v_{1}(t)-v_{2}(t)$, $X_{12}(t)=X_{1}(t)-X_{2}(t)$, and $l_1$ and $l_0$ are two constants and are selected to be the same as in \cite{pcca}.  Vehicle 2 behaviors remain the same as in Study 2. 

\begin{figure}
    \centering
    \includegraphics[width=0.45\textwidth]{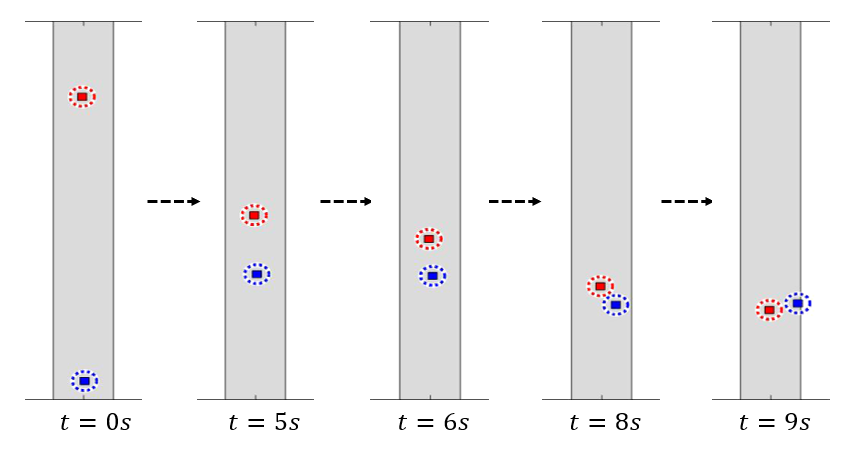}
    \caption{Oncoming traffic scenario with the PCCA. }
    \label{oncoming_pcca}
\end{figure}

As the original PCCA does not consider the control saturation, we first test the  PCCA with unlimited control authority, i.e., unbounded action space $a_1(t)\in\mathcal{U}_1=\mathbb{R}^2$. All other settings and initial conditions are selected to be the same as in Study 2. The result is shown in Figure \ref{oncoming_pcca}. As expected, the PCCA  successfully avoids collision. Notably, the ego vehicle performs  different behaviors with the PCPG and with the PCCA, by comparing Figures \ref{oncoming_PCPG} and \ref{oncoming_pcca}. Specifically, in the PCCA, the ego vehicle responds at a much later time compared to in the PCPG (e.g., at $t=5s$, the ego vehicle is well-prepared for the bypassing in the PCPG, while it does not respond yet in the PCCA, since the constraint in \eqref{PCCA} is not active yet). Because of this later response, a much larger $\|a_1(t)\|$ is triggered after the constraint in \eqref{PCCA} becomes active   (e.g., the ego vehicle in Figure \ref{oncoming_pcca} moves backward during $6s-8s$ to avoid collision). To quantify this observation and the AV driving performance in addition to safety, let us consider the following two performance metrics:

\textit{a) Deviation from the desired speed}. We calculate both the average and the maximum longitudinal speed deviations, which are related to travel efficiency, according to
\begin{equation}\label{travel_eff}
\begin{split}
    \Delta \bar{v}_{y,1}&=\frac{\sum_{t=0}^{t_{f}}\Delta v_{y,1}(t)}{t_{f}},\\
   \Delta v_{y,1}^{max}&=\max_{t\in[0,t_f]}\Delta v_{y,1}(t),
\end{split}
\end{equation}
where
\begin{equation}
       \Delta v_{y,1}(t)=(v_{y,1}(t)-v_{y,1}^d),
\end{equation}
$t_{f}=\frac{15s}{0.5s}=30$ is the total time steps of the simulation, and $v_{y,1}^d=5m/s$ is the ego vehicle desired longitudinal speed. Smaller $\Delta \bar{v}_{y,1}$ and $\Delta v_{y,1}^{max}$ indicate better adherence to the desired speed, and thus, are preferred.  

\textit{b) Deviation from the desired heading}. The average and the maximum heading angle deviations, which are related to ride comfort \cite{comfort}, are calculated by 
\begin{equation}
\begin{split}
    \Delta \bar{\phi}_{y,1}&=\frac{\sum_{t=0}^{t_{f}} \Delta \phi_{1}(t)}{t_{f}},\\
    \Delta \phi_{1}^{max}&=\max_{t\in[0,t_f]} \Delta\phi_{1}(t),
    \end{split}
\end{equation}
where
\begin{equation}\label{40}
\begin{split}
\Delta\phi_{1}(t)&=\phi_{1}(t)-\phi_{1}^d=\arctan\frac{v_{y,1}(t)}{v_{x,1}(t)}-\phi_{1}^d,
\end{split}
\end{equation}
 $\phi_{1}^d=90^{\circ}$ is the ego vehicle desired heading angle. Small $\Delta \bar{\phi}_{y,1}$ and $\Delta \phi_{1}^{max}$ are preferred.

With the above metrics, we compare the performance of the PCPG and the PCCA. The PCPG leads to 
\begin{equation}\nonumber
    \begin{split}
        \Delta \bar{v}_{y,1}&=0.22m/s, \quad \quad
        v_{y,1}^{max}=1.29m/s\\
        \Delta \bar{\phi}_{y,1}&=6.48^{\circ}, \quad \quad \quad
        \Delta \phi_{1}^{max}=36^{\circ}
    \end{split}
\end{equation}
The PCCA leads to
\begin{equation}\nonumber
    \begin{split}
        \Delta \bar{v}_{y,1}&=1.82m/s, \quad \quad
        v_{y,1}^{max}=8.63m/s\\
        \Delta \bar{\phi}_{y,1}&=36.68^{\circ}, \quad \quad \quad
        \Delta \phi_{1}^{max}=170.77^{\circ}
    \end{split}
\end{equation}
Thus the PCPG results in better adherence to the desired trajectory, thanks to its early response realized by the ego vehicle predictive capability on vehicle 2 strategies, enabling the ego vehicle to plan ahead and to take necessary and moderate maneuvers in advance. In contrast, in the PCCA, the ego vehicle simply reacts to the instantaneous states and actions of vehicle 2, and does not have a capability to predict vehicle 2 future behaviors or to plan ahead. 

Next let us consider limited control authority for the ego vehicle, i.e., $a_{x,1}(t)\in[-3,3]$ $m/s^2$ and $a_{y,1}(t)\in[-3,3]$ $m/s^2$. Specifically, if $a_{x,1}(t)$ or $a_{y,1}(t)$ from \eqref{PCCA} is larger than $3$ $m/s^2$ (resp., less than $-3$ $m/s^2$), we make the ego vehicle implement $3$ $m/s^2$ (resp., $-3$ $m/s^2$). In this case, due to the late response and the bounded action space, a collision happens if the ego vehicle employs the PCCA. The performance is similar to Figure \ref{PG_stupid}. 

 \begin{table*}[!h]
\centering
\caption{Statistical comparison}
\begin{tabular}{c|c|c|c|c|c|c}
\hline  
\multicolumn{1}{c|}{\multirow{2}{*}{}}
&\multicolumn{3}{c|}{\multirow{2}{*}{Two-vehicle oncoming traffic}}
&\multicolumn{3}{c}{\multirow{2}{*}{Five-vehicle intersection-crossing}}\\
\multicolumn{1}{c|}{\multirow{2}{*}{}}
&\multicolumn{3}{c|}{\multirow{2}{*}{}}
&\multicolumn{3}{c}{\multirow{2}{*}{}}
\\
\hline
Ego vehicle controller & PG  & PCPG & PCCA & PG  & PCPG & PCCA
\\
\hline
\multirow{1}{*}{Collision rate} & $264/500$ & $0/500$ & $37/500$ & $28/500$ & $0/500$ & $78/500$
\\
\hline
\multirow{1}{*}{Ave/Max longitudinal speed deviation (m/s)} & $0.11/1.14$  & $0.20/3.63$  & $0.90/8.30$  & $1.00/5.90$  & $1.56/8.20$ & $0.74/7.50 $
\\
\hline
\multirow{1}{*}{Ave/Max heading angle deviation} & $3.75^{\circ}/38.26^{\circ}$ & $6.50^{\circ}/69.70^{\circ}$ & $14.85^{\circ}/179.8^{\circ}$ & $0^{\circ}/0^{\circ}$ & $0^{\circ}/0^{\circ}$ & $0^{\circ}/0^{\circ}$
\\
\hline
\multirow{1}{*}{Ave/Max computational time (s)} & $0.06/0.25$   & $0.10/0.33$   & $<0.01/<0.01$   & $0.08/0.30$  & $0.11/0.34$   & $<0.01/<0.01$  
\\
\hline
\end{tabular}\label{table1}
\end{table*}

\textbf{Study 4: Statistical studies.}  To make the comparative results  more convincing, we conduct statistical studies. $500$ scenarios are tested with randomly selected initial lateral  positions of the two vehicles, i.e., $x_2(0)-x_1(0)$ is uniformly distributed in $[0.2,2]$ $m$, and vehicle 2's aggressiveness, i.e.,  $\theta_2$ is uniformly distributed in $[1,10]$. The ego vehicle always assumes  $\theta_2=1$ in all scenarios. Both vehicles have limited control authorities. The statistical results are shown in Table \ref{table1}, where ``collision rate'' represents the number of scenarios where collision happens divided by the total tested scenarios, ``speed deviation" and ``heading angle deviation" are calculated according to  \eqref{travel_eff}-\eqref{40}, and
``computational time" represents the running time for each decision-making collected from MATLAB$^{\circledR}$ on a laptop with an Intel Core i7-10750H processor clocked at $2.60$ GHz and $16$ GB of RAM. The optimization in the PG and in the PCPG approaches is performed using the Matlab genetic algorithm function `\textit{ga}' \cite{ga}, and in the PCCA approach using the quadratic programming function `\textit{quadprog}'\cite{qp}. 

As we can observe from Table \ref{table1}, among the three approaches, PG, PCPG, and PCCA, the PCPG is the only one that ensures the ego vehicle safety in all tested scenarios. Meanwhile, the PCPG also leads to reasonably small speed deviations and heading angle deviations. Moreover, all of the three approaches are computationally practical, as all their running time is significantly less than the sampling time 0.5s. The PCCA has the lightest computational load and its running time is always less than 0.01s. While the PCPG has the best performance in both collision avoidance and mobility, between PCCA and PG,  the PCCA is more effective in avoiding collisions in the tested scenarios. It is because safety is a hard constraint in the PCCA, while a soft constraint in the PG, and the effectiveness of the PG is heavily dependent on reliable prediction of the surrounding agents' behaviors.

\subsection{Multi-vehicle intersection-crossing}\label{intersection}
This subsection considers a multi-vehicle intersection-crossing scenario, as shown in Figure \ref{f_intersection}. 
 The vehicle labeled with the number ``1" is the ego  {vehicle}, aiming to go straight to cross the intersection. The heading directions of other vehicles are marked using grey arrows, and are known to the ego vehicle. Each vehicle controls its own longitudinal acceleration, and the lateral acceleration is  zero, i.e., no lane-changing while crossing an intersection. The action space for each vehicle is {$\mathcal{U}_i=[-3,3]$ $m/s^2$}.
 
 \begin{figure}[thpb]
\centering
\includegraphics[width=0.26\textwidth]{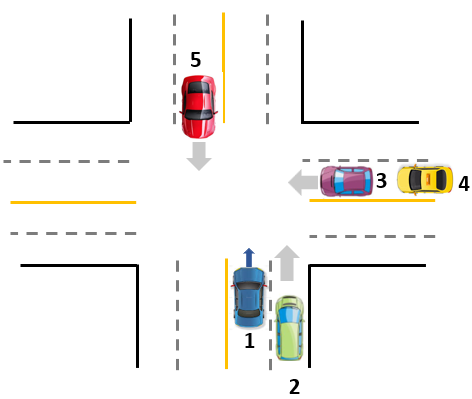}
\caption{Multi-vehicle intersection-crossing scenario}\label{f_intersection}
\end{figure}

The cost {function}  is designed according to \eqref{cost_predictor}-\eqref{cost_collision}, where $\theta_1=1$ and $\theta_j\in[1,100]$ for $j=2,3,4,5$, unknown to the ego vehicle. Note that when $\theta_j=100$, the collision avoidance term in \eqref{cost_predictor} is very lightly weighted, indicating a safety-agnostic vehicle.  To make the scenario more challenging for the ego vehicle, we also make the surrounding vehicle desired speed $v_{j}^d$ unknown to the ego vehicle. The ego vehicle assumes that $v_{j}^d=5m/s,  \forall j$, but $v_{j}^d\in[5,15]m/s$ for $j=2,3,4,5$. 

With the above setting, we test and compare the performance of the PG, PCPG, and PCCA. 

 \begin{figure*}[thpb]
\centering
\includegraphics[width=0.9\textwidth]{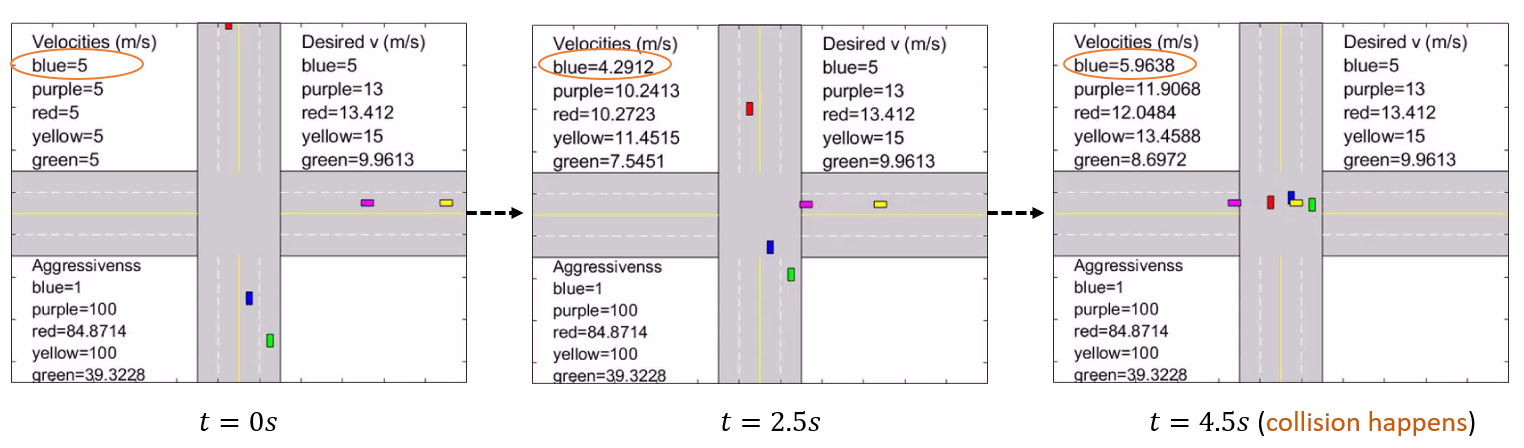}
\caption{Multi-vehicle intersection-crossing with PG alone. The blue is the ego vehicle. A collision happens due to the ego vehicle mis-information.}\label{intersection_PG}
\end{figure*}

 \begin{figure*}[thpb]
\centering
\includegraphics[width=0.9\textwidth]{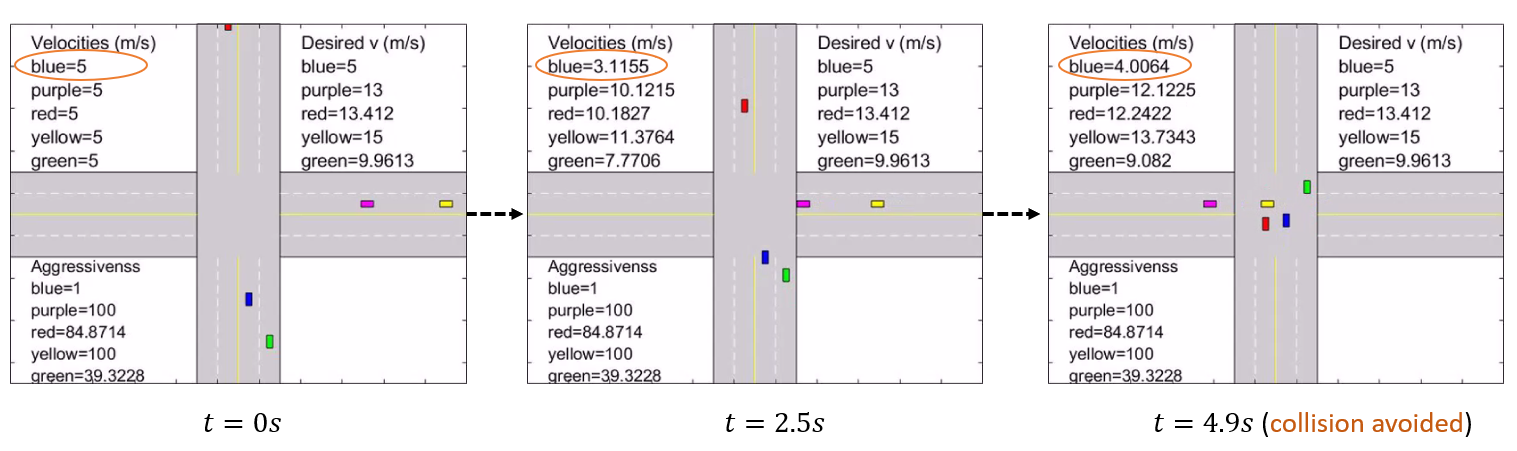}
\caption{Multi-vehicle intersection-crossing with PCPG. The collision is successfully avoided despite $\hat{J}_j\neq J_j$, $j\in\mathcal{N}_{-i}$. }\label{intersection_pcpg}
\end{figure*}

In the PG, the ego vehicle solves a $5$-player game with the assumed cost functions $\hat{J}_j$, $j=2,3,4,5$. When $\hat{J}_j$ and $J_j$ are close, the ego vehicle performs reasonably well in terms of both safety and mobility. However, if $J_j$ deviates significantly from $\hat{J}_j$, a collision may happen. Figure \ref{intersection_PG} shows one such scenario. Vehicles' aggressiveness, instantaneous and desired speeds are all labeled in the figure. In this pictured scenario, the purple and the yellow vehicles are both safety-agnostic ($\theta_j=100$) and have much larger desired speed than the ego vehicle expects ($v_{3}^d=13m/s$ and $v_{4}^d=15m/s$). At around $4s$, a collision with the yellow vehicle happens, as the ego vehicle expects the yellow to slow down, while the yellow speeds up to track its desired speed.

With the same setting as in Figure \ref{intersection_PG}, we test the PCPG. As shown in Figure \ref{intersection_pcpg}, the collision is now successfully avoided. The ego vehicle notices the action deviation of the purple and the yellow vehicles, and  improves its strategies accordingly. It can be seen by comparing $v_1(t)$ at $t=2.5s$  in the PCPG (Figure \ref{intersection_pcpg}) and in the PG (Figure  \ref{intersection_PG}). 

For the PCCA, similar to the oncoming traffic scenario, if the ego vehicle has unlimited control authority, then collisions can always be avoided. If the control authority is limited, then collisions may happen. Due to the page limit, we do not show the detailed scenario pictures, instead, we present the statistical results in Table \ref{table1}.

From Table \ref{table1}, the PCPG performs the best among the three approaches in terms of safety, as it is the only one that ensures no collision. 
Different from the two-vehicle scenarios, the PG in the five-vehicle scenarios performs better than the PCCA in terms of safety. It is because the PG encodes the global agent interactions in its decision-making, while the PCCA encodes agent interactions in a pairwise manner (i.e., the CBF constraints are for each vehicle pair separately \cite{pcca}). It is possible that a strategy solving the conflict with one vehicle  worsens the situation with another. Since the CBF constraints are not always active for all vehicles, it is possible that when the constraint becomes active with one vehicle, the ego vehicle is handling the conflict with another, and their handling strategies are contradicting each other, leading to collision with at least one of the two vehicles.  Therefore, the CBF-based approach can be ``short-sighted" in multi-vehicle scenarios, due to the inability to capture global agent interactions. 

From the computational perspective, all the three approaches are practical, since their running time is always less than the sampling time. Comparing the running time in the two-vehicle and in the five-vehicle scenarios, we can see that the  time cost does not increase much with the increase of the number of agents. Specifically, with the PCPG, the average computational time is $0.10s$ in the two-vehicle scenario and $0.11s$ in the five-vehicle scenario, indicating good computational scalability. 

{
\subsection{Validation with naturalistic traffic data}\label{simulation_validation}
This subsection validates the PCPG performance using naturalistic traffic dataset provided by Federal Highway Administration's (FHWA's) Next Generation Simulation (NGSIM) program \cite{I80}. The data was collected  on a segment of Interstate 80 (I-80) in Emeryville (San Fransico), California. A snapshot of the I-80 freeway and the traffic situation is shown in Figure\ref{I80}. We focus on the merging area (the same road segment and data extraction as in \cite{zhaojian}) and aim to validate whether the PCPG enables safe and efficient merging in dense traffic. To this end, we test the PCPG performance on $10$ merging vehicles and compare it with the original human-driven vehicle trajectories. All vehicles' initial conditions and the surrounding vehicles' movements are from the I-80 dataset. The merging vehicles' movements are controlled by the PCPG (purple vehicle in Figure \ref{I80_PCPG}) and by the human drivers as recorded (yellow vehicle in Figure \ref{I80_PCPG}), respectively. In the PCPG, a limited number of surrounding vehicles within the field of view are selected as game players, according to their time-to-colision (TTC) with the ego vehicle \cite{mine_1}. Specifically, a vehicle $j$ is considered as a game player if the difference between its TTC (denoted as $T_{j1}^c$) and the ego vehicle TTC ($T_{1j}^c$) is within a threshold, i.e., $|T_{j1}^c-T_{1j}^c|\leq \Delta T^c$. This criterion selects the most ``dangerous" vehicles for the ego vehicle. 
\begin{figure}[thpb]
\centering
\includegraphics[width=0.35\textwidth]{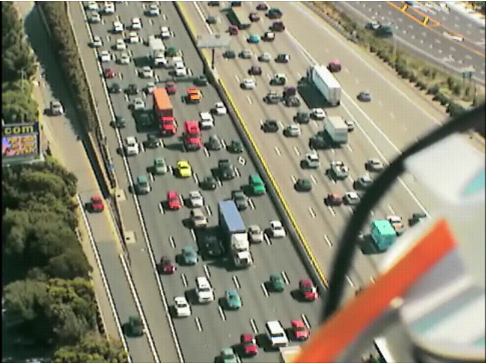}
\caption{A snapshot of I-80 traffic video \cite{I80}.}\label{I80}
\end{figure}

The merging vehicle performance is characterized by two metrics: The minimum longitudinal distance with the surrounding vehicles and the average ego vehicle speed during merging. The first metric indicates how safe this merge is, as longer inter-vehicle distances are safer. The second metric suggests how efficient the merge is, as higher  speed corresponds to higher efficiency. The validation results are reported in Table \ref{Table II}, which leads to the following observations. 
\begin{enumerate}
    \item In all tested situations, both the PCPG-driven and the human-driven vehicles can successfully merge into the highway, validating the effectiveness of the PCPG.
    \item In safety-critical situations, where the minimum inter-vehicle distance is less than $40m$, the PCPG is safer than the corresponding human drivers, as reflected by longer inter-vehicle distances.
    \item The average speeds of the human-driven and of the PCPG-driven vehicles are almost the same, validating the travel efficiency of the PCPG. 
\end{enumerate}}

 \begin{figure}[thpb]
\centering
\includegraphics[width=0.42\textwidth]{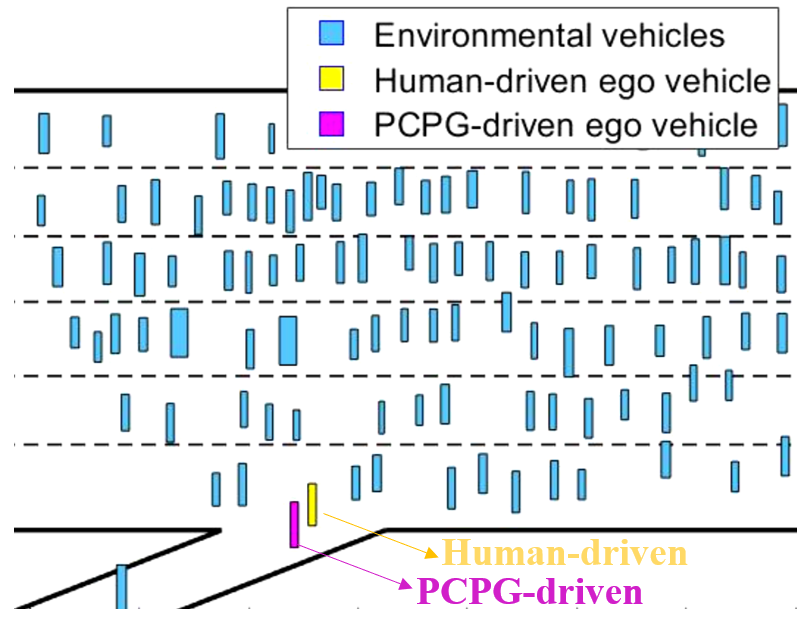}
\caption{A snapshot of the validation simulation video, where the merging vehicle trajectories are from the PCPG (purple) and from the original human driver (yellow). 
}\label{I80_PCPG}
\end{figure}
\begin{table}[!h]
\centering
\caption{Statistics of merging behaviors}\label{Table II}
\begin{tabular}{c|c|c|c|c}
\hline  
\multicolumn{1}{c|}{\multirow{2}{*}{}}
&\multicolumn{2}{c|}{\multirow{2}{*}{Min. inter-vehicle distance}}
&\multicolumn{2}{c}{\multirow{2}{*}{Ave. ego vehicle speed}}\\
\multicolumn{1}{c|}{\multirow{2}{*}{}}
&\multicolumn{2}{c|}{\multirow{2}{*}{}}
&\multicolumn{2}{c}{\multirow{2}{*}{}}
\\
\hline
Vehicle & Human driver & PCPG & Human driver & PCPG 
\\
\hline
\multirow{1}{*}{1}  & $12m$ & $16m$ & $13m/s$ & $13m/s$
\\
\hline
\multirow{1}{*}{2} & $21m$ & $29m$ & $13m/s$ & $13m/s$
\\
\hline
\multirow{1}{*}{3} & $85m$ & $79m$ & $20m/s$ & $19m/s$
\\
\hline
\multirow{1}{*}{4} & $12m$ & $19m$ & $13m/s$ & $13m/s$
\\
\hline
\multirow{1}{*}{5} & $35m$ & $37m$ & $13m/s$ & $16m/s$
\\
\hline
\multirow{1}{*}{6} & $12m$ & $21m$ & $13m/s$ & $13m/s$
\\
\hline
\multirow{1}{*}{7} & $25m$ & $20m$ & $14m/s$ & $14m/s$
\\
\hline
\multirow{1}{*}{8} & $12m$ & $16m$ & $12m/s$ & $12m/s$
\\
\hline
\multirow{1}{*}{9} & $8m$ & $10m$ & $12m/s$ & $12m/s$
\\
\hline
\multirow{1}{*}{10} & $12m$ & $14m$ & $7m/s$ & $7m/s$
\\
\hline
\end{tabular}
\end{table}

\section{Conclusion} \label{conclusion}
In this paper, a predictor-corrector potential game framework has been proposed to address the AV decision-making problem.  A receding horizon multi-player game was formulated to mimic human driver reasoning and to characterize agent interactions. To ensure solution existence, algorithm convergence, and computational scalability, a potential game based Predictor is developed.  To account for inaccurate cost functions of surrounding agents, a best-response based Corrector is introduced. By feeding back the surrounding agents' action deviations to the ego vehicle decision-making, the prediction error on other agents' future trajectories is corrected, leading to improved ego vehicle strategies. This framework guarantees the ego vehicle safety under suitable conditions and approximates the optimal solution despite the lack of information.
Comparative studies between PG, PCPG and  CBF-based approaches show that the PCPG ensures the ego vehicle safety, while the CBF-based approach is not always able to do so if the ego vehicle has limited control authority. It is because the PCPG enables the ego vehicle to always keep in mind the global situation, while the CBF constraints become active only when a collision threat is clear, i.e., passively reacting to the environment instead of proactively predicting and planning.  
With this framework, the long-standing challenges in AV decision-making, including safety, interpretability,  computational scalability, applicability to diverse scenarios, and human-like  intelligence, are all appropriately addressed. As a future work, we will further improve the practicability of the PCPG framework by considering more advanced Corrector designs, e.g., filtered versions of previous action deviations, to account for possible sensing noises.

\bibliography{references}
\bibliographystyle{IEEEtran}

\begin{IEEEbiography}[{\includegraphics[width=1in,height=1.25in,clip,keepaspectratio]{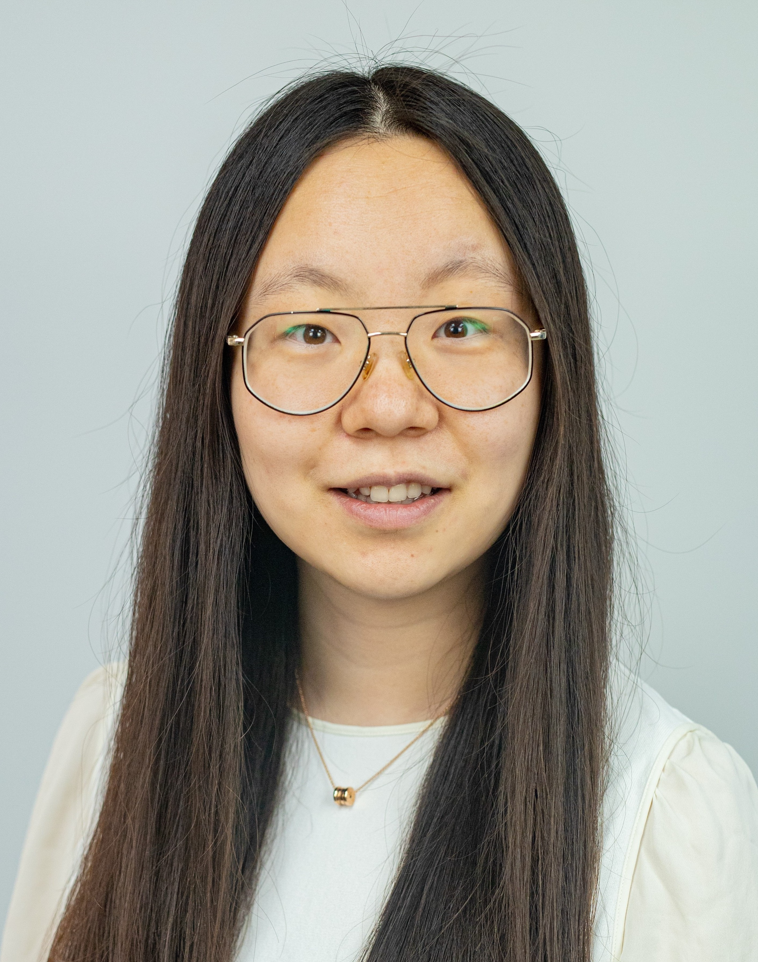}}]{Mushuang Liu} is an Assistant Professor in the Department of Mechanical and Aerospace Engineering at the University of Missouri, Columbia, MO. She worked as a postdoc in the Department of Aerospace Engineering at the University of Michigan, Ann Arbor, MI. She received her Ph.D degree from the University of Texas at Arlington in 2020 and her B.S. degree from the University of Electronic Science and Technology of China in 2016. Her research lies in  control,  learning, and games for multi-agent systems.
\end{IEEEbiography}

\begin{IEEEbiography}[{\includegraphics[width=1in,height=1.25in,clip,keepaspectratio]{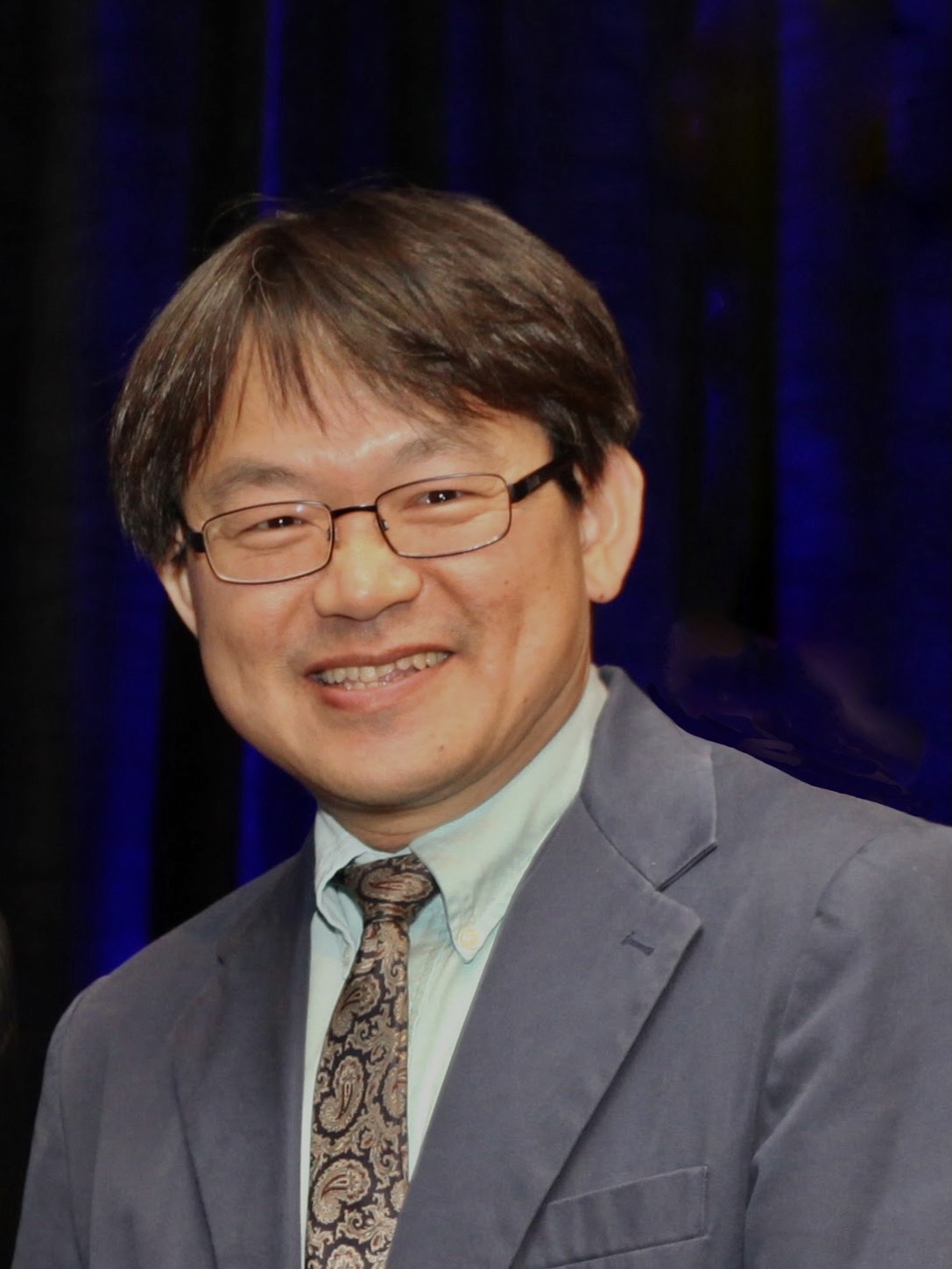}}]{H. Eric Tseng}  received the B.S. degree from the National Taiwan University, Taipei, Taiwan, in 1986, and the M.S. and Ph.D. degrees in mechanical engineering from the University of California at Berkeley, Berkeley, in 1991 and 1994, respectively. In 1994, he joined Ford Motor Company. At Ford, he is currently a Senior Technical Leader of Controls and Automated Systems in Research and Advanced Engineering. Many of his contributed technologies led to production vehicles implementation. His technical achievements have been recognized internally seven times with Ford’s highest technical award—the Henry Ford Technology Award, as well as externally by the American Automatic Control Council with Control Engineering Practice Award in 2013. He has over 100 U.S. patents
and over 120 publications. He is an NAE Member.
\end{IEEEbiography}

\begin{IEEEbiography}[{\includegraphics[width=1in,height=1.5in,clip,keepaspectratio]{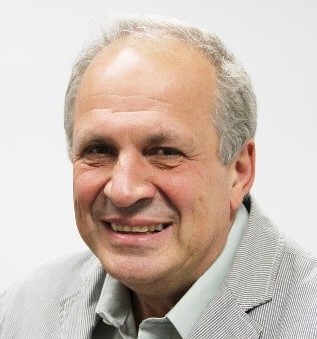}}]{Dimitar Filev} (Fellow, IEEE) is Senior Henry Ford Technical Fellow in Control and AI with Research $\&$ Advanced Engineering – Ford Motor Company. His research is in computational intelligence, AI and intelligent control, and their applications to autonomous driving, vehicle systems, and automotive engineering.  He holds over 100 granted US patents and has been awarded with the IEEE SMCS 2008 Norbert Wiener Award and the 2015 Computational Intelligence Pioneer’s Award. Dr. Filev is a Fellow of the IEEE and a member of the National Academy of Engineering. He was President of the IEEE Systems, Man, and Cybernetics Society (2016-2017).
\end{IEEEbiography}

\begin{IEEEbiography}[{\includegraphics[width=1in,height=1.5in,clip,keepaspectratio]{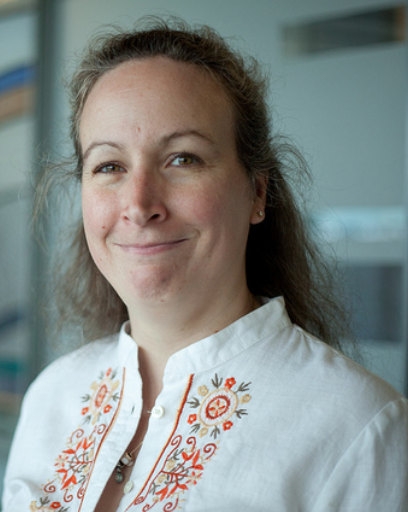}}]{Anouck Girard} received
the Ph.D. degree in ocean engineering from the
University of California at Berkeley, Berkeley, CA, USA, in 2002. She has been with the University of Michigan, Ann Arbor, MI, USA, since 2006, where she is currently a Professor of aerospace engineering. She has coauthored the book Fundamentals of Aerospace Navigation and Guidance (Cambridge University Press, 2014). Her current research interests include vehicle dynamics and control systems. She was a recipient of the Silver Shaft Teaching Award from the University of Michigan and the Best Student Paper Award from the American Society of Mechanical Engineers.
\end{IEEEbiography}

\begin{IEEEbiography}[{\includegraphics[width=1in,height=1.25in,clip,keepaspectratio]{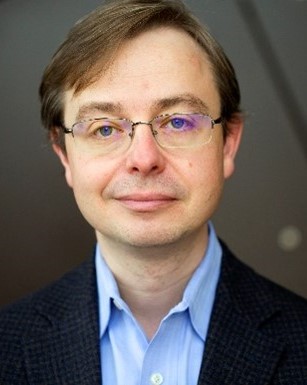}}]{Ilya Kolmanovsky} (Fellow, IEEE) received the Ph.D. degree in aerospace engineering from the University of Michigan, Ann Arbor, MI, USA, in 1995. Prior to joining the University of Michigan as a Faculty Member in 2010, he was with Ford Research and Advanced Engineering, Dearborn, MI, for close to 15 years. He is currently a Pierre T. Kabamba Collegiate Professor in the Department of Aerospace Engineering at the University of Michigan. His research interests include control theory for systems with state and control constraints, and control applications to aerospace and automotive systems. He is a Fellow of IFAC and NAI and a Senior Editor of IEEE TRANSACTIONS ON CONTROL SYSTEMS TECHNOLOGY.
\end{IEEEbiography}

\end{document}